\def\be{\begin{equation}}
	\def\ee{\end{equation}}
\def\bea{\begin{eqnarray}}
	\def\eea{\end{eqnarray}}
\renewcommand{\vec}[1]{\boldsymbol{#1}}
\DeclareFontFamily{OMX}{MnSymbolE}{}
\DeclareSymbolFont{MnLargeSymbols}{OMX}{MnSymbolE}{m}{n}
\DeclareFontShape{OMX}{MnSymbolE}{m}{n}{
	<-6>  MnSymbolE5
	<6-7>  MnSymbolE6
	<7-8>  MnSymbolE7
	<8-9>  MnSymbolE8
	<9-10> MnSymbolE9
	<10-12> MnSymbolE10
	<12->   MnSymbolE12
}{}
\DeclareFontShape{OMX}{MnSymbolE}{b}{n}{
	<-6>  MnSymbolE-Bold5
	<6-7>  MnSymbolE-Bold6
	<7-8>  MnSymbolE-Bold7
	<8-9>  MnSymbolE-Bold8
	<9-10> MnSymbolE-Bold9
	<10-12> MnSymbolE-Bold10
	<12->   MnSymbolE-Bold12
}{}
\let\llangle\@undefined
\let\rrangle\@undefined
\DeclareMathDelimiter{\llangle}{\mathopen}%
{MnLargeSymbols}{'164}{MnLargeSymbols}{'164}
\DeclareMathDelimiter{\rrangle}{\mathclose}%
{MnLargeSymbols}{'171}{MnLargeSymbols}{'171}
\newtheorem{definition}{Definition}
\newtheorem{theorem}{Theorem}
\newtheorem{lemma}{Lemma}
\newtheorem{proposition}{Proposition}
\newcommand{\id}{\mathds{1}}
\newcommand{\End}{\text{End }}
\newcommand{\g}{\mathsf g}
\newcommand{\W}{\mathcal W}
\newcommand{\q}{\mathfrak q}
\newcommand{\G}{\mathsf G}
\renewcommand{\O}{\mathsf O}
\newcommand{\Q}{\mathsf Q}
\newcommand{\K}{\mathsf K}
\newcommand{\sG}{\mathcal G}
\newcommand{\PI}[1]{\mathcal Z\left[#1\right]}
\renewcommand{\P}{\mathcal P}
\renewcommand{\H}{\mathsf H}
\newcommand{\avg}[1]{\Big\llangle #1\Big\rrangle_{Sp}}
\newcommand{\Aut}{\text{Aut}}
\newcommand{\tr}{\text{tr}}
\newcommand{\what}{\widehat}
\newcommand{\Sp}{\operatorname{Sp}}
\renewcommand{\End}{\operatorname{End}}
\renewcommand{\t}{\lambda}
\tikzset{
	pics/torus/.style n args={3}{
		code = {
			\providecolor{pgffillcolor}{rgb}{1,1,1}
			\begin{scope}[
				yscale=cos(#3),
				outer torus/.style = {draw,line width/.expanded={\the\dimexpr2\pgflinewidth+#2*2},line join=round},
				inner torus/.style = {draw=pgffillcolor,line width={#2*2}}
				]
				\draw[outer torus] circle(#1);\draw[inner torus] circle(#1);
				\draw[outer torus] (180:#1) arc (180:360:#1);\draw[inner torus,line cap=round] (180:#1) arc (180:360:#1);
			\end{scope}
		}
	}
}
\newcommand{\hb}[1]{\vcenter{\hbox{\begin{tikzpicture}[fill=none,draw=black]
				\pic{torus={1cm}{2.8mm}{70}};
				\ifthenelse{\equal{#1}{0}}{}{\draw [yscale=cos(70),black,thick](1.05,0.08) arc (0:285:1.05cm);
				\draw [yscale=cos(70),black,thick](1.05,0.08) arc (0:-45:1.05cm);
				\node [black] at (0.5,-0.3) {$#1$};}
\end{tikzpicture}}}}
\newcommand{\weyl}[2]{\vcenter{\hbox{\begin{tikzpicture}
			\draw[thick] (0,0) to (0,2);
			\draw[thick] (0,2) to (2,2);
			\draw[thick] (2,2) to (2,0);
			\draw[thick] (2,0) to (0,0);
			\draw[blue] (1,0) to (1,2);
			\node[circle,inner sep=2pt,draw, fill, color = white] at (1,1) {};
			\draw[blue] (0,1) to (2,1);
			\node[anchor = south] at (0.4,1) {$#1$};
			\node[anchor = south] at (1.3,0) {$#2$};
\end{tikzpicture}}}}
\newcommand{\defect}[1]{\vcenter{\hbox{\begin{tikzpicture}[fill=none,draw=black]
				\pic{torus={1cm}{2.8mm}{70}};
				\ifthenelse{\equal{#1}{0}}{}{\draw [dashed,yscale=cos(70),black,thick](1.05,0.08) arc (0:285:1.05cm);
					\draw [dashed,yscale=cos(70),black,thick](1.05,0.08) arc (0:-45:1.05cm);
					\node [black] at (0.5,-0.3) {$#1$};}
\end{tikzpicture}}}}
\title{Abelian 3D TQFT gravity, ensemble holography and stabilizer states}
\author{Nikolaos Angelinos}
\affiliation{Yau Mathematical Sciences Center, Tsinghua University, Beijing 100084, China} 
\abstract{We construct a model of 3D quantum gravity based on abelian topological quantum field theory (TQFT), by defining the gravitational path-integral as a sum over all 3D topologies with genus-$g$ boundary $\Sigma_g$. The path-integral of an abelian TQFT $\mathcal T$ on any single topology with boundary $\Sigma_g$ prepares a stabilizer state. This way, $\mathcal T$ partitions all these topologies into finitely many equivalence classes, where each topology within a class is associated with the same stabilizer state. The gravitational path-integral can thus be rephrased as a weighted sum over representative topologies, which are further organized into orbits under the mapping class group of $\Sigma_g$. One orbit is represented by handlebodies, whose average reproduces the “Poincaré series of the vacuum”, while additional orbits describe non-handlebody topologies.
	The resulting quantum gravity state is $\Sp(2g,\mathbb Z)$-invariant and can be expressed as a weighted average of 2D CFT partition functions on $\Sigma_g$. This establishes a duality between a weighted sum over bulk topologies and a weighted sum over boundary CFTs. We introduce the “$\t$-matrix”, which relates bulk and boundary weights. The $\t$-matrix can be fully determined by the set of topological boundary conditions that the TQFT admits, and we present a systematic procedure to construct this set. Using this framework, we evaluate the $\t$-matrix and the TQFT gravity state in several tractable examples.}
	\gdef\@fpheader{}
\begin{document}
	\maketitle

\section{Introduction}

The quantum gravity path-integral is notoriously difficult to define, as it requires summing over all possible spacetime topologies and integrating over all metrics compatible with each topology. In models of quantum gravity based on a topological quantum field theory (TQFT), this problem simplifies, as the theory is not sensitive to the metric, leaving only a sum over topologies. Even so, a general prescription for the quantum gravity path-integral remains elusive.

A concrete proposal was made by Maloney and Witten \cite{Maloney_2010}, who defined the AdS$_3$
gravity path integral as a sum over handlebody topologies. Their result lacked a clear holographic interpretation, since the putative dual 2D CFT would have a continuous spectrum with negative degeneracies. A similar construction was carried out in 3D abelian Chern–Simons (CS) theory \cite{Maloney_2020,Afkhami-Jeddi:2020ezh}, where the sum over handlebodies yielded an expression that can be interpreted as an average over the Narain moduli space of 2D CFTs. This motivated the idea of ``ensemble holography" in three dimensions \cite{Collier:2021rsn,Cotler:2020hgz,Benjamin:2021wzr,Ashwinkumar:2023jtz,Ashwinkumar:2023ctt,Kames-King:2023fpa,Forste:2024zjt,Perez:2020klz,Datta:2021ftn,Benjamin:2021ygh,Chakraborty:2021gzh,Raeymaekers:2021ypf,Benini:2022hzx,Saidi:2024zdj}, where the bulk gravitational path integral is dual not to a single boundary CFT, but rather to an ensemble of CFTs.

Ensemble holography in 3D has also been explored in settings where the boundary theories are RCFTs \cite{Castro:2011zq,Jian:2019ubz,Romaidis:2023zpx,Meruliya:2021utr,Meruliya:2021lul}. However, except for some simple cases, the handlebody sum does not produce a consistent RCFT ensemble. Abelian 3D TQFTs provide more tractable models, but even in this simplified setting, restricting to genus-$g$ handlebodies typically results in a boundary ensemble with genus-dependent weights, which is incompatible with holography.  

A consistent prescription was introduced in \cite{Dymarsky:2024frx}, using a genus-reduction map applied to a sum over large-genus handlebodies. In this approach, the boundary CFT ensemble has genus-independent coefficients, and the bulk sum (after genus reduction) includes contributions from all topologies, not just handlebodies.

 The same work showed that the abelian TQFT path integral on any 3D topology with a genus-$g$ boundary $\Sigma_g$ prepares a stabilizer state associated with a maximally isotropic subgroup of $H_1(\Sigma_g,\mathbb Z_N)$ with respect to the intersection form, where $N$ is the exponent of the $1$-form symmetry group $\G$. Consequently, for a given TQFT $\mathcal T$, all topologies with boundary $\Sigma_g$ are partitioned into finitely many equivalence classes, with the path-integral of $\mathcal T$ preparing the same state (up to normalization) within each class. Choosing a representative topology in each class, the gravitational path integral can be reformulated as a weighted sum over these representatives\footnote{Also see \cite{Nicosanti:2025xwu} for a similar approach, where the quantum gravity path integral for abelian 3D TQFT was rephrased as a sum over homology groups.}, which are further organized into orbits under the mapping class group $\Sp(2g,\mathbb Z)$ of $\Sigma_g$. We do not attempt a first-principles derivation of the weights and keep them arbitrary. Compatibility with holography requires $\Sp(2g,\mathbb Z)$-invariance, so the weights must be uniform within each orbit. We therefore define the gravitational path integral of a TQFT $\mathcal T$ as
 \be \mathcal Z_{gravity}[\mathcal T,\Sigma_g;w]\equiv\sum_{\vec d}w_{\vec d}\avg{|\Omega_{\vec d}\rangle},\label{Zdef1}\ee
where $w={w_{\vec d}}$ are the \textit{bulk weights}, $|\Omega_{\vec d}\rangle$ is the stabilizer state prepared by the path-integral on the topology $\mathcal V_g^{\vec d}$ with $\partial\mathcal V_g^{\vec d}\cong\Sigma_g$, $\vec d$ labels its first homology group, and $\avg{\cdot}$ denotes a uniform average over the $\Sp(2g,\mathbb Z)$-orbit, which contains finitely many states. The classification of the representative topologies will be given in subsection \ref{etaLagr}.


The resulting quantum gravity state is $\Sp(2g,\mathbb Z)$-invariant and can be expressed as a linear combination of stabilizer states specified by subgroups of $\G$ that are maximally isotropic with respect to the quadratic form $\q$. These states are organized into orbits under the orthogonal group $\O(\G)$ preserving $\q$. Each such state corresponds to a topological boundary condition (TBC) \cite{KS1} on $\Sigma_g$ and is associated with a bosonic 2D CFT partition function \cite{Barbar:2023ncl}. This yields a dual description as a weighted ensemble of 2D CFTs,
\be \mathcal Z_{gravity}[\mathcal T,\Sigma_g;w]=\sum_{\vec q}m_{\vec q}{1\over |\O_{\vec q}|}\sum_{i\in\O_{\vec q}} Z^{(g)}_{\vec q,i},\ee
where $Z_{\vec q,i}^{(g)}$ is a CFT partition function on $\Sigma_g$. The outer sum runs over 
$\O(\G)$-orbits, the inner sum averages uniformly within each orbit $\O_{\vec q}$, and the coefficients $m_{\vec q}$ are determined by $w_{\vec d}$. We will refer to $m_{\vec q}$ as \textit{boundary weights}.

Our main result is introducing the ``$\t$-matrix'' and a method to compute it, relating bulk and boundary weights,
\be m_{\vec q}=\sum_{\vec d} w_{\vec d}\t_{\vec d,\vec q}.\ee
The $\t$-matrix is completely determined by the set of TBCs that the TQFT admits. Concretely, it can be written as
\be\t_{\vec d,\vec q}=\sum_{\vec q'}v_{\vec q'}^{\vec d}Y_{\vec q';\vec q},\ee
where the matrix $Y_{\vec q';\vec q}$ is obtained from the overlap matrix of TBC states and $v_{\vec q}^{\vec d}$ is a universal matrix determined only by the $1$-form symmetry group $\G$ and the genus of the boundary surface.
 
 
We compute the $\lambda$-matrix and the quantum gravity state in various examples by a direct method, which requires identifying the full set of TBCs. For a general abelian TQFT, this is a highly nontrivial problem. In section \ref{sec3} we present a systematic procedure for constructing the full set of TBCs in a class of abelian TQFTs, based on the groupoid structure introduced in \cite{Gaiotto_2021}.

The paper is organized as follows. Section \ref{sec1} reviews the relation between abelian TQFTs and the quantum stabilizer formalism. Section \ref{sec2} classifies the two types of stabilizer states that appear on the bulk and boundary sides of the duality. Section \ref{sec:TQFTgrav} introduces the $\t$-matrix, relating the bulk and boundary weights and describes a general method to compute it. Section \ref{sec3} develops additional tools that we use to construct the complete set of TBCs. Section \ref{sec:examples} explicitly computes the $\t$-matrix and gravity states in various examples. We conclude in section \ref{sec:conc} with a summary and open problems.

\section{Abelian TQFT and Stabilizer Formalism}\label{sec1}

Our starting point is an Abelian 3D TQFT $\mathcal T$ specified by the pair $(\G,\q)$, where $\G$ is a finite abelian group and $\q$ is a non-degenerate quadratic form on $\G$. This pair fully determines the data of a pointed Modular Tensor Category (MTC) whose objects are anyons labeled by elements of $\G$ and their topological spins are given by
\be
\theta(a)=\exp\!\left(\frac{\pi i}{N}\,\q(a)\right), \qquad a\in\G,
\ee
where $N$ is the exponent of $\G$ and $\q:\G\to\mathbb Z_{2N}$ is the quadratic form.
The Frobenius–Schur exponent $N_{FS}$ is the minimal integer such that $\theta(a)^{N_{FS}}=1$ for all $a\in\G$.
To avoid subtleties when $N$ is even, we will henceforth assume $N$ is odd, so that $N_{FS}=N$.
Introducing the primitive root of unity $\omega=e^{2\pi i/N}$, the spins can be written compactly as
\be
\theta(a)=\omega^{2^{-1}\q(a)} ,
\ee
where $2^{-1}$ denotes the multiplicative inverse of $2$ in $\mathbb Z_N$.
The spins determine the braiding matrix
\be
\sqrt{|\G|}\,S_{ab}=\frac{\theta(a+b)}{\theta(a)\theta(b)}=\omega^{\g(a,b)},
\ee
where $\g:\G\times\G\to\mathbb Z_N$ is the bilinear form associated to $\q$
\be \g(a,b)=2^{-1}(\q(a+b)-\q(a)-\q(b))\mod N.\ee
The spins also determine the chiral central charge $c$ modulo $8$ via
\be
e^{\frac{2\pi i}{8}c}=\frac{\sum_{a\in\G}\theta(a)}{|\sum_{a\in\G}\theta(a)|}.
\ee
 Throughout, we will only consider $(\G,\q)$ such that the chiral central charge vanishes. This condition is necessary for the theory to admit topological boundary conditions \cite{Kaidi:2021gbs}.


Any finite abelian group $\G$ decomposes into a product of $p$-groups,
\be
\G\cong \G_{p_1}\times\cdots\times\G_{p_k},
\ee
where $p_1,\dots,p_k$ are distinct primes, and $\G_{p}$ denotes a group consisting of elements whose order is a power of $p$. The quadratic form factorizes accordingly,
\be
\q=\q_{p_1}\oplus\cdots\oplus\q_{p_k}, \qquad 
\q_{p_i}=\q|_{\G_{p_i}} .
\ee
Thus the TQFT itself decomposes into a product of decoupled TQFTs,  \cite{Kaidi:2021gbs}
\be
\mathcal T\;\cong\; \mathcal T_{p_1}\times\cdots\times\mathcal T_{p_k}, 
\qquad 
\mathcal T_{p_i}=(\G_{p_i},\q_{p_i}).
\ee
Without loss of generality, we can restrict to the case where $\G$ is a $p$-group.

A concrete realization of this data is given by the $U(1)^r$ Chern–Simons theory \cite{Belov:2005ze},
\be
\mathcal T: \quad 
S[A]=\frac{i}{4\pi} K_{IJ} \int_{\mathcal V} A_I \wedge dA_{J},\label{CSaction}
\ee
where $K$ is an integral, symmetric matrix of signature $(\lambda_+,\lambda_-)$ and with even diagonal entries. In other words, $K$ is the Gram matrix of a full-rank, even integral lattice $\Lambda$, $K=\Lambda^T g \Lambda$ with $g=\mathrm{diag}(1^{\lambda_+},-1^{\lambda_-})$.
The condition that the chiral central charge vanishes translates to
$\lambda_+-\lambda_- = 0 \pmod 8$. The $1$-form symmetry group is given by $\G=\mathbb Z^r/(K\mathbb Z^r)$ and the topological spins are $\theta(x)=\exp\bigl(\pi i\,x^T K^{-1} x\bigr)$.

We will only consider spaces $\mathcal V$ whose boundary is a connected surface $\Sigma_g$ of genus $g$.
For $\Sigma_g$, we choose a canonical basis of $1$-cycles $\gamma_1,\dots,\gamma_{2g}$ with intersection form
\be
\langle \gamma_i,\gamma_j\rangle=\eta_{ij}, 
\qquad 
\eta=\begin{pmatrix}
	0_{g} & -\id_{g}\\
	\id_{g} & 0_{g}
\end{pmatrix}.\label{intersection}
\ee
The homology group $H_1(\Sigma_g,\mathbb Z_N)\cong\mathbb Z_N^g\times\mathbb Z_N^g$ is thus a $\mathbb Z_N$-module equipped with the symplectic form $\eta$.

Anyon loops are described by
\be
\W^{\alpha}_x=\mathrm{Tr}_x\left[\exp\oint_\alpha A\right], \qquad \alpha\in H_1(\Sigma_g,\mathbb Z_N),\;~~ x\in\G.
\ee
Geometrically, these loops are knots (or more generally, links)\footnote{More precisely, $\alpha$ specifies an equivalence class of links modulo $N$. If $\alpha$ has order $N$, it lifts to a coprime integer tuple $\tilde \alpha\in H_1(\Sigma_g,\mathbb Z)$ describing an unknot. If instead $\alpha$ has order $d<N$, its minimal representative is a link of $d$ parallel unknots $\tilde \alpha/d$, where $d=\gcd(\alpha)$.} colored by elements of $\G$, embedded on the genus-$g$ surface $\Sigma_g$ without self-intersections \cite{Labastida:1990bt}. The links can be pushed to the interior of the space $\mathcal V$ bounding $\Sigma_g$, where they acquire framing.

\subsection{Anyon basis and Weyl operators\label{subsection:HW}}

We now review the construction of the anyon basis for the TQFT Hilbert space $\mathcal H$ on the surface $\Sigma_g$, as well as the Weyl basis for $\End(\mathcal H)$. The anyon basis plays the role of a computational basis for a system of $g$ qudits, each with local Hilbert space of dimension $|\G|$. While qudits are usually defined over $\mathbb Z_N$ or finite fields $\mathbb F_q$, more general constructions exist for arbitrary groups (see \cite{Bloomquist:2018hrp} in the context of anyons).

Consider first the torus Hilbert space $\mathcal H\cong \mathbb C^{|\G|}$. To construct the basis, we fill in the cycle $(0,1)\in H_1(\Sigma_1,\mathbb Z_N)$ of the torus to obtain a solid handlebody $\mathcal V_1$. The dual cycle $(1,0)$ supports non-contractible anyon lines. The path integral on $\mathcal V_1$ with such insertions defines the basis states
\be
|x\rangle\equiv\PI{\mathcal V_1,\W_x^{(1,0)}}=\hb{x}, 
\qquad x\in\G,
\ee
which we normalize by $\langle x|y\rangle=\delta_{x,y}$.
Thus $\{|x\rangle,{x\in\G}\}$ is an orthonormal basis for $\mathcal H$.

A natural basis for $\End(\mathcal H)$, the \textit{Weyl basis}, can be constructed by anyon insertions on the cylinder $\Sigma_1\times I$ with two boundaries of opposite orientation. 
 The path integral on $\Sigma_1\times I$ with line insertions constructs linear operators acting on $\mathcal H$. The matrix elements of such an operator can be computed by gluing handlebodies with anyon insertions to the two boundaries.

Inserting an anyon $\W^{\alpha}_x$ in $\Sigma_1\times I$ produces the corresponding anyon operator, which by slight abuse of notation we denote by the same symbol
\be
\W^{\alpha}_x\equiv\PI{\Sigma_1\times I,\W^{\alpha}_x}.
\ee
Choosing independent cycles in $H_1(\Sigma_1\times I,\mathbb Z_N)$, we first define
\be
X_x\equiv \W_x^{(1,0)}=\vcenter{\hbox{\begin{tikzpicture}[scale=0.7]
			\draw[thick] (0,0) to (0,2);
			\draw[thick] (0,2) to (2,2);
			\draw[thick] (2,2) to (2,0);
			\draw[thick] (2,0) to (0,0);
			\draw[blue] (0,1) to (2,1);
			\node[anchor = south] at (0.4,1) {$x$};
\end{tikzpicture}}}=\sum_{y,z} N_{xy}^z |z\rangle\langle y| 
=\sum_{y} |x+y\rangle\langle y|,
\ee
where we made use of the fusion rules $N_{xy}^z=\delta_{x+y,z}$.
Insertion along the dual cycle yields the operator
\be
Z_x\equiv \W_x^{(0,-1)}=\vcenter{\hbox{\begin{tikzpicture}[scale=0.7]
			\draw[thick] (0,0) to (0,2);
			\draw[thick] (0,2) to (2,2);
			\draw[thick] (2,2) to (2,0);
			\draw[thick] (2,0) to (0,0);
			\draw[blue] (1,0) to (1,2);
			\node[anchor = south] at (1.3,0) {$x$};
\end{tikzpicture}}}=\sum_y \omega^{\g(x,y)}|y\rangle\langle y|,
\ee
where the phase arises from the Hopf linking between $x$ and $y$.

The operators $X_x$ and $Z_x$ are generalizations of the Pauli operators familiar from the stabilizer formalism. A general anyon operator can be expressed as
\be
\W^{(a,b)}_x=\omega^{2^{-1}ab\,\q(x)}X_x^a Z_x^{-b}
= \omega^{-2^{-1}ab\,\q(x)} Z_x^{-b} X_x^a
\ee
and its action on the computational basis is \cite{Labastida_2001}
\be
\W^{(a,b)}_x|y\rangle
= \omega^{-2^{-1}ab\,\q(x)}\,
\omega^{-b\,\g(x,y)}\,|ax+y\rangle.
\ee
Two anyons obey the commutation relation
\be
\W_x^\alpha \W_y^{\alpha'}
= \omega^{\g(x,y)\eta(\alpha',\alpha)} \W_y^{\alpha'} \W_x^\alpha.
\ee
We see that two anyons commute either when they are inserted along orthogonal cycles with respect to $\eta$, or their charges are orthogonal with respect to $\g$. This is an early hint about the two types of abelian subgroups of anyons we will study in section \ref{sec2}.
There is redundancy in this notation, due to the relation
\be
\W^{\lambda\alpha}_x=\W^{\alpha}_{\lambda x}, 
\qquad \lambda\in\mathbb Z_N.
\ee
Collectively, the indices of anyon operators form the group
$\G\otimes \mathbb Z_N^2 \;\cong\;\G^2$, which is
a $\mathbb Z_N$-module equipped with the symplectic form $J=\g\otimes \eta$.
It is therefore useful to define the \textit{Weyl operators} $\W:\G^2\to \End(\mathcal H)$ as
\be
\W(a,b)\equiv\omega^{2^{-1}\g(a,b)}\,\W_a^{(1,0)}\W_b^{(0,-1)}=\weyl{a}{b},\quad (a,b)\in \G^2.
\ee

For abelian theories, this construction extends straightforwardly to genus-$g$, since the Hilbert space $\mathcal H \cong \mathbb C^{|\G|^g}$ factorizes into $g$ tensor copies of the torus Hilbert space, with basis states
\be
|x_1,\dots,x_g\rangle\equiv \hb{x_1}~\#\cdots\#~\hb{x_g}=|x_1\rangle\otimes\cdots\otimes|x_g\rangle,
\ee
where $\#$ above denotes the connected sum of the boundary surfaces.
Anyon operators decompose as
\be
\W^{(a_1,\dots,a_g,b_1,\dots,b_g)}_x
=\W_x^{(a_1,b_1)}\otimes\cdots\otimes\W_x^{(a_g,b_g)},
\ee
and similarly for the Weyl operators,
\be
\W(\alpha)=\W(\alpha_1)\otimes\cdots\otimes \W(\alpha_g),
\qquad \alpha=(\alpha_1,\dots,\alpha_g)\in (\G\times\G)^g.
\ee
The Weyl operators form the basis of $\End(\mathcal H)$ we have been looking for. This basis is orthonormal with respect to the Hilbert–Schmidt inner product
\be
\frac{1}{|\G|^g}\,\mathrm{tr}\left(\W(\alpha)^\dagger \W(\beta)\right)=\delta_{\alpha,\beta},
\qquad \alpha,\beta\in\G^{2g}.
\ee

The group $\G^{2g}$ can be viewed as the additive group of $n\times 2g$ matrices with columns in $\G$
\be
\alpha=\begin{pmatrix}
	a_1^{(1)} \cdots a_g^{(1)} & b_1^{(1)} \cdots b_g^{(1)}\\
	\vdots & \vdots \\
	a_1^{(n)} \cdots a_g^{(n)} & b_1^{(n)} \cdots b_g^{(n)}
\end{pmatrix}\in \G^{2g}.\label{matrixalpha}
\ee
The Weyl operators form a projective representation of $\G^{2g}$ with phases determined by the symplectic form $J=\g\otimes\eta$
\be
\W(\alpha)\W(\beta)=\omega^{-2^{-1}J(\alpha,\beta)}\W(\alpha+\beta).\label{prod1}
\ee
The group generated by these operators is the \textit{Weyl–Heisenberg group} (also sometimes called the generalized Pauli group on $g$ qudits),
\be
\mathcal P=\{\omega^k \W(\alpha)\mid \alpha\in\G^{2g},\;k\in\mathbb Z_N\},
\ee
which fits into the split short exact sequence
\be
\begin{tikzcd}
	0 \arrow[r] & \mathbb Z_N \arrow[r] & \mathcal P \arrow[r] & \G^{2g} \arrow[r] & 0
\end{tikzcd}.
\ee
Here $\mathbb Z_N\cong\{\omega^k\mid k\in\mathbb Z_N\}$ is the center of $\mathcal P$, and $\mathcal P/\mathbb Z_N\cong\G^{2g}$.

Since the Weyl operators form an orthonormal basis, any $\rho\in\End(\mathcal H)$ admits an expansion with coefficients
\be
\Xi_\rho(\alpha)=\frac{1}{|\G|^g}\,\mathrm{tr}\left(\W(\alpha)^\dagger \rho\right),
\qquad \alpha\in\G^{2g}.
\ee
The function $\Xi_\rho(\alpha)$ is called the \textit{characteristic function} of $\rho$.
A closely related quantity is the \textit{Wigner function} $\mathbb W_\rho(\alpha)$, defined as the symplectic Fourier transform of $\Xi_\rho$ \cite{Gross_2006}
\be
\mathbb W_\rho(\alpha)=\sum_{\beta} \omega^{-J(\alpha,\beta)} \Xi_\rho(\beta).
\ee
The Wigner function provides the expansion coefficients of $\rho$ in the Fourier-dual basis to the Weyl operators. This basis consists of ``dual anyon lines", which are lines confined to the invertible surface defect obtained by ``higher-gauging" $\G$ (see appendix \ref{highergauging}).

\subsection{Clifford group and its $\O(\G)\times \Sp(2g,\mathbb Z_N)$ subgroup}\label{sp2g}

The unitary operators on $\mathcal H$ also act on $\End(\mathcal H)$ by conjugation. The subgroup $\Aut(\mathcal P)$ of unitary automorphisms of the Weyl operators is called the \textit{Clifford group}. Its elements are unitaries $U$ such that
\be
U\W(\alpha)U^\dagger=\omega^{\phi}\W(\gamma(\alpha)),\ee
for some $\phi\in\mathbb Z_N$ and $\gamma:\G^{2g}\to\G^{2g}$.

Immediately from the definition, $\P\subseteq \Aut(\P)$, and in fact $\P$ is a normal subgroup. The Clifford group therefore fits into the short exact sequence (which splits for odd $N$ \cite{Tolar_2018})
\be
\begin{tikzcd}
	1 \arrow[r] & \mathcal P \arrow[r] & \Aut(\P) \arrow[r] & \Aut(\P)/\mathcal P \arrow[r] & 1
\end{tikzcd}.
\ee

Because Clifford elements act unitarily, they preserve the Weyl commutation relations, and hence $\gamma$ must preserve the symplectic form $J$. Denote by $\Sp(2g,\G)$ the group of transformations preserving $J$. For any $\gamma\in \Sp(2g,\G)$, there exists a unitary operator $U_\gamma$ such that
\be
U_\gamma \W(\alpha) U_\gamma^\dagger=\omega^{\lambda(\gamma,\alpha)}\W(\gamma(\alpha)).
\ee
The operators $\{U_\gamma\}$ form a projective representation of $\Sp(2g,\G)$, called the \textit{Weil representation}. Thus, the full Clifford group is generated by $\P$ together with $\{U_\gamma, \gamma\in \Sp(2g,\G)\}$. Equivalently, the Clifford group is a (projective) representation of the group of affine symplectic transformations \cite{Gross_2006}
\be
\alpha\mapsto \gamma(\alpha)+\alpha',\qquad \alpha,\alpha'\in\G^{2g},~\gamma\in \Sp(2g,\G),
\ee
where the above transformation is implemented by the Clifford operator $\W(\alpha')U_\gamma$.

In this work we are especially interested in the subgroup $\O(\G)\times \Sp(2g,\mathbb Z_N)\subseteq \Sp(2g,\G)$.
The orthogonal group $\O(\G)$ describes automorphisms of the quadratic module $(\G,\q)$, while the symplectic group $\Sp(2g,\mathbb Z_N)$ preserves the intersection form (\ref{intersection}) on $H_1(\Sigma_g,\mathbb Z_N)$.
Their actions are clearest in terms of the anyon operators.
For $Q\in \O(\G)$,
\be
U_Q \W_x^{\alpha}U_Q^\dagger=\W_{Q(x)}^{\alpha},
\ee
while for $\gamma\in \Sp(2g,\mathbb Z_N)$,
\be
U_\gamma \W_x^\alpha U_\gamma^\dagger=\W_x^{\gamma(\alpha)}.
\ee

In terms of Weyl operators, the combined action of $Q\in\O(\G)$ and $\gamma\in\Sp(2g,\mathbb Z_N)$ is
\be
\W(\alpha)\mapsto \W(Q\alpha\gamma),\qquad \alpha\in \G^{2g},
\ee
where $Q$ acts by left multiplication and $\gamma$ by right multiplication on the matrix form of $\alpha$ (\ref{matrixalpha}). Because any operator can be written as linear combination of Weyl operators, it follows that the actions of $\O(\G)$ and $\Sp(2g,\mathbb Z_N)$ commute.

The symplectic group $\Sp(2g,\mathbb Z_N)$ is a representation of the mapping class group of the genus-$g$ surface $\Sigma_g$. It is generated by
the local Hadamard $S$ and Phase $T$ gates, corresponding to torus modular transformations
\be
S=\frac{1}{\sqrt{|\G|}}\sum_{a,b\in\G} \omega^{-\g(a,b)}|a\rangle\langle b|,  
\qquad  
T=\sum_{a\in\G} \omega^{2^{-1}\q(a)}|a\rangle\langle a|,
\ee
along with the two-qudit CADD gate (generalizing the qubit CNOT), acting as
\be
\operatorname{CADD}:\quad|a\rangle\otimes |b\rangle\mapsto |a\rangle\otimes |a+b\rangle.
\ee
The $S$ and $T$ operators act independently on each torus in the connected sum $\Sigma_g\cong\Sigma_1\#\cdots\#\Sigma_1$, generating the subgroup $\operatorname{SL}(2,\mathbb Z_N)^g\subseteq \Sp(2g,\mathbb Z_N)$. Together with CADD (which couples pairs of tori), they generate the entire symplectic group.

\subsection{Stabilizer groups and stabilizer states}

An abelian subgroup $\mathcal S$ of the Weyl-Heisenberg group $\P$ is called a \textit{stabilizer group}. Since its elements commute, they can be simultaneously diagonalized and they fix an invariant subspace of $\mathcal H$, called the \textit{stabilizer code}.
From (\ref{prod1}), Weyl operators satisfy
\be
\W(\alpha)\W(\beta)=\omega^{-J(\alpha,\beta)}\W(\beta)\W(\alpha).
\ee
Thus, a stabilizer group corresponds to an \textit{isotropic submodule} $\mathcal C\subseteq \G^{2g}$ with respect to $J$, i.e. a submodule on which $J$ vanishes. Because $\P$ is a nontrivial central extension of $\G^{2g}$ by $\mathbb Z_N$, $\mathcal C$ alone does not uniquely determine $\mathcal S$, but we also need to make a choice of phases. For each isotropic $\mathcal C$ and vector $v\in\G^{2g}$, define the stabilizer group
\be
\mathcal S(\mathcal C,v)=\{\omega^{-J(v,\alpha)}\W(\alpha): \alpha\in\mathcal C\}.
\ee
Summing over $\mathcal S$ produces the projector onto the code subspace
\be
\Pi_{\mathcal C,v}=\frac{1}{|\mathcal C|}\sum_{\alpha\in\mathcal C}\omega^{-J(v,\alpha)}\W(\alpha).\label{sstate}
\ee

The centralizer of $\mathcal S$ in $\P$ is generally a non-abelian group, corresponding to the dual (with respect to $J$) module $\mathcal C^\perp$. A \textit{maximally isotropic} submodule satisfies $\mathcal C=\mathcal C^\perp$, and is also called \textit{Lagrangian}. In this case the projector reduces to a rank-one operator, i.e. a \textit{stabilizer state} \cite{Gross_2006}
\be
\Pi_{\mathcal C,v}=|v;\mathcal C\rangle\langle v;\mathcal C|.\label{pcv}
\ee
From (\ref{sstate}), we can immediately read its characteristic function 
\be
\Xi_{|v;\mathcal C\rangle\langle v;\mathcal C|}(\alpha)=\frac{1}{|\G|^g}\,\omega^{-J(v,\alpha)}\delta_{\alpha\in \mathcal C},
\ee
and hence its Wigner function is simply the indicator function
\be
\mathbb W_{|v;\mathcal C\rangle\langle v;\mathcal C|}(\alpha)=\frac{1}{|\G|^g}\,\delta_{\alpha\in \mathcal C+v}.
\ee

Up to a phase, the state defined by (\ref{pcv}) can be written as
\be
|v;\mathcal C\rangle=\W(v)|0;\mathcal C\rangle,
\ee
so for each fixed $\mathcal C$, the states $\{\W(v)|0;\mathcal C\rangle\}$ form an orthonormal basis of $\mathcal H$. One may view $|0;\mathcal C\rangle$ as the “ground state” defined by the module $\mathcal C$, with the rest of the states obtained by inserting anyon lines. In what follows, we make the choice $v=0$.

Two families of Lagrangian submodules are particularly important. Note that if $\mathcal C\subseteq \G$ is isotropic with respect to $\q$, then $\mathcal C\otimes \mathbb Z_N^{2g}\subseteq \G^{2g}$ is isotropic.
Similarly, if $\mathcal C'\subseteq \mathbb Z_N^{2g}$ is isotropic with respect to the symplectic form $\eta$, then $\G\otimes \mathcal C'\subseteq \G^{2g}$ is isotropic.
This motivates the following definition
\begin{definition}\label{def1}
	A submodule $\mathcal C\subseteq \G^{2g}$ is $\eta$-isotropic ($\eta$-Lagrangian) if $\mathcal C=\G\otimes C$ for some isotropic (Lagrangian) $C\subseteq \mathbb Z_N^{2g}$ with respect to $\eta$.
	
	A submodule $\mathcal C\subseteq \G^{2g}$ is $\q$-isotropic ($\q$-Lagrangian) if $\mathcal C=\mathcal L\otimes \mathbb Z_N^{2g}$ for some isotropic (Lagrangian) $\mathcal L\subseteq \G$ with respect to $\q$.
\end{definition}
These two types of Lagrangian submodules have a special meaning in the TQFT. The $\eta$-Lagrangian submodules define the first homology group of a 3-manifold with boundary $\Sigma_g$. The stabilizer state $|0;\mathcal C\rangle$ is prepared by the TQFT path-integral on this space, while $|v;\mathcal C\rangle$ are the orthonormal states obtained by line insertions.
A $\q$-Lagrangian submodule $\mathcal L\otimes\mathbb Z_N^{2g}$ describes topological boundary conditions (TBC) on the genus-$g$ surface. The associated stabilizer state can be obtained by gauging $\mathcal L\subseteq \G$ in the bulk of a genus-$g$ handlebody.

\subsection{Gauging 1-form symmetries in 3D\label{gaug}}

We will now review the gauging of $1$-form symmetries in 3D, which we will need in later sections. 

\subsubsection*{Gauging $1$-form symmetries in the bulk}

To gauge the group $\G$ inside the bulk of a handlebody $\mathcal V_g$, one inserts a network of anyon lines dual to a triangulation of $\mathcal V_g$ and sums over all charges in $\G$. If the result depends on the choice of triangulation, then the $1$-form symmetry group $\G$ is \textit{anomalous}. Different choices of quadratic forms $\q$ classify possible anomalies of the $1$-form symmetry \cite{Putrov:2025xmw}. In an abelian theory, a network of non-anomalous lines reduces to loops wrapping around each non-contractible cycle (see figure \ref{bulkgauge}).

$\G$ is non-anomalous if the anyon lines have trivial mutual braiding. Since by assumption $\q$ is non-degenerate, $\G$ itself cannot be gauged in the bulk. However, one can gauge a subgroup $\H\subseteq \G$ on which $\q$ vanishes. In this case, the sum is performed only over lines with charges in $\H$. The resulting theory, denoted $\mathcal T/\H$, contains lines in $\H^\perp$ with those in $\H$ identified \cite{Hsin_2019}. Thus the residual $1$-form symmetry group of $\mathcal T/\H$ is $\H^\perp/\H$. 

This gauging process is a projection onto the stabilizer code associated to $\H\otimes\mathbb Z_N^{2g}\cong H_1(\Sigma_g,\H)$, and the Hilbert space of $\mathcal T/\H$ can be viewed as the space of ``logical" states. In particular, if $\H$ is Lagrangian, then this gauging process constructs a stabilizer state, corresponding to the 1-dimensional Hilbert space of $\mathcal T/\H$.

\begin{figure}
	$$\sum_{x_1,\dots,x_g\in\mathcal \H}\hb{x_1}~\#\cdots\#~\hb{x_g}$$
	\caption{The network that performs the gauging of the symmetry $\H\subset\G$ in the handlebody $\mathcal V_g$ reduces to $g$ loops around the non-trivial cycles. It is necessary for the quadratic form $\q$ to vanish on $\H$.\label{bulkgauge}}
\end{figure}


\subsubsection*{Gauging $1$-form symmetries on codimension $1$ surface}

We can also gauge $\G$ along a co-dimension 1 surface to construct a surface operator. Concretely, consider inserting a genus-$g$ surface parallel to the two boundaries of $\Sigma_g\times I$ and construct a network of lines dual to a triangulation of this surface. Since the anyons in our theory have trivial crossings (the $F$-symbols can be chosen trivial), this process is independent of triangulation and hence gauging on this surface is always possible.

More generally, we can gauge any subgroup $\H\subseteq\G$ on the surface. The possible gaugings are classified by pairs $(\H,\nu)$, where $\nu\in H^2(\H,\mathbb C^\times)$ specifies a choice of discrete torsion. 
 Varying $\nu$ amounts to redefining trivalent junction phases while keeping $F$-symbols trivial \cite{Roumpedakis:2022aik}
\be \vcenter{\hbox{\begin{tikzpicture}[scale = 0.7]
			\draw[thick] (0,0) -- (2,0) -- (2,2) -- (0,2) -- (0,0);
			\draw  (1,0) -- (1,0.25) node[right] { $b$} -- (1,2);
			\draw (0,1) -- (0.5,1.2) node[above] {$a$} -- (1,1.4);
			\draw (1,0.6) -- (2,1);
			\draw [fill=black] (1,1.4) circle (0.04);
			\draw [fill=black](1,0.6) circle (0.04);
\end{tikzpicture}}}~\mapsto ~\nu(a,b)~\vcenter{\hbox{\begin{tikzpicture}[scale = 0.7]
			\draw[thick] (0,0) -- (2,0) -- (2,2) -- (0,2) -- (0,0);
			\draw  (1,0) -- (1,0.25) node[right] { $b$} -- (1,2);
			\draw (0,1) -- (0.5,1.2) node[above] {$a$} -- (1,1.4);
			\draw (1,0.6) -- (2,1);
			\draw [fill=black] (1,1.4) circle (0.04);
			\draw [fill=black](1,0.6) circle (0.04);
\end{tikzpicture}}}.\ee
Above we expressed a torus Weyl operator by resolving the intersection of the two torus cycles into a pair of trivalent junctions. Redefining the junction by a phase, we obtain an overall phase $\nu(a,b)=\omega^{2^{-1}B(a,b)}$ for some antisymmetric $B:\H\times\H\to\mathbb Z_N$. We introduced a coefficient $2^{-1}$ in the exponent for later convenience. 

To gauge $(\H,\nu)$, we sum over all charges in $\H$ (see figure \ref{surfgauge}).
The path integral on $\Sigma_g\times I$ with this network inserted produces a surface operator
\be \rho_{(\H,\nu)}={1\over |\H|^g}\sum_{\alpha\in H_1(\Sigma_g,\H)}\nu(\alpha)W(\alpha) .\label{Hnu}\ee
Gauging $(\H,\nu)$ generates a ``quantum symmetry" group $\G/\H\times \widehat \H$, which describes ``higher lines" supported on the surface. These operators form alternative bases for $\End(\mathcal H)$. Specifically, 

\begin{figure}
	$$\sum_{a_1,b_1,\dots,a_g,b_g\in\mathcal \H}\nu(a_1,b_1)~\weyl{a_1}{b_1}~\otimes\cdots\otimes~\nu(a_g,b_g)~\weyl{a_g}{b_g}$$
	\caption{Gauging of $\H\subseteq\G$ on a surface $\Sigma_g$, for some choice of discrete torsion $\nu\in H^2(\H,\mathbb C^\times)$.\label{surfgauge}}
\end{figure}

By construction, every surface operator is invariant under $\Sp(2g,\mathbb Z_N)$. It is well-known \cite{KS2} that surface operators correspond to Lagrangian submodules of $\G\times \bar\G$ with respect to the quadratic form $\q\oplus(-\q)$. They can thus be interpreted either as operators in the commutant of $\Sp(2g,\mathbb Z_N)$, or equivalently, as stabilizer states describing a TBC of the theory $\mathcal T\times \bar{\mathcal T}$. This fact will be very useful in constructing TBC states when $\mathcal T$ is itself a double, since classifying subgroups, choices of torsion and using equation (\ref{Hnu}) is easier than directly searching for all Lagrangian submodules. This will be the main topic of section \ref{sec3}.


\section{Lagrangian Submodules and their Stabilizer States}\label{sec2}

In this section we provide the full classification of the two types of Lagrangian submodules in definition \ref{def1}. In subsection \ref{etaLagr} we discuss the $\eta$-Lagrangian submodules, which describe the representative topologies that an abelian TQFT can distinguish. In subsection \ref{sec:classification} we discuss the $\q$-Lagrangian submodules, which describe topological boundary conditions. Finally, in subsection \ref{sec:CFT} we review the relation between topological boundary conditions and 2D CFTs.

\subsection{Classification of $\eta$-Lagrangian submodules and representative topologies}\label{etaLagr}

We begin with the $\eta$-Lagrangian submodules, which have the form $\G\otimes C$, where $C\subseteq \mathbb Z_N^{2g}$ is Lagrangian with respect to the intersection form $\eta$.  Since they contain a $\G$ tensor factor, both the submodules and their associated stabilizer states are invariant under $\O(\G)$. In fact, these modules are invariant under the entire group $\Aut(\G)$.

 The associated stabilizer states can be prepared by the TQFT path integral on some 3-manifold with boundary $\Sigma_g$, with the factor $C$ describing its first homology group. The TQFT path integral on \textit{any} space with boundary $\Sigma_g$ prepares one of these states \cite{Dymarsky:2024frx}. We describe the simplest representative topology corresponding to each of these states and later, in section \ref{sec:TQFTgrav}, we will define the gravitational path integral as a weighted sum over these representatives. 
 
 A complete classification of Lagrangian submodules of $\mathbb Z_N^{2g}$ is given by \cite{albouy2008matrixreductionlagrangiansubmodules} (Theorem 7), which we restate in our setting.
 \begin{theorem}[Classification of symplectic Lagrangian submodules]\label{theorem:sympl_codes}
 	Let $C$ be a submodule of $\mathbb Z_N^{2g}$ and let $N=\prod_{i=1}^k p_i^{s_i}$ be the prime factorization of $N$. Define the $g$-tuple $\vec d=(d_1,\dots,d_g)$ with
 	\begin{equation}\label{tuples}
 		d_1\mid d_2\mid\cdots\mid d_g\mid N, \quad 1\leq d_j\leq \prod_{i=1}^k p_i^{\lfloor s_i/2\rfloor},\quad j=1,\dots,g.
 	\end{equation}
 	Then $C$ is a Lagrangian submodule with respect to $\eta$ if and only if there exists $\gamma\in \Sp(2g,\mathbb Z)$ such that
 	\begin{equation}\label{gen}
 		\mathrm{diag}(N/d_1,N/d_2,\dots,N/d_g,d_1,d_2,\dots,d_g)\times\gamma
 	\end{equation}
 	is a generator matrix for $C$.
 \end{theorem}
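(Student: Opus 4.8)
The plan is to prove the two implications separately: the ``if'' direction is a short computation, and the ``only if'' direction carries the content. Write $M=\mathbb Z_N^{2g}$ with the standard symplectic form $\eta$. The single fact driving everything is that $\det\eta$ is a unit, so $v\mapsto\eta(v,-)$ identifies $M$ with $\mathrm{Hom}_{\mathbb Z_N}(M,\mathbb Z_N)$; hence for any submodule $C$ one has $C^\perp\cong(M/C)^\vee$, so $|C^\perp|=|M|/|C|=N^{2g}/|C|$. Therefore an isotropic submodule $C$ (one with $C\subseteq C^\perp$) is Lagrangian exactly when $|C|=N^g$, and each implication reduces to an isotropy check together with a cardinality count.

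For ``if'', suppose $C$ is the row span of $D_{\vec d}\,\gamma$, with $D_{\vec d}=\mathrm{diag}(N/d_1,\dots,N/d_g,d_1,\dots,d_g)$, $\gamma\in\Sp(2g,\mathbb Z)$, and $\vec d$ as in (\ref{tuples}). Reducing $\gamma$ modulo $N$ gives a symplectic automorphism of $M$ (its determinant is $1$), so $C$ is Lagrangian iff the row span $C_0$ of $D_{\vec d}$ is. Writing $\epsilon_1,\dots,\epsilon_{2g}$ for the standard basis of $M$, we have $C_0=\bigoplus_{j=1}^g\mathbb Z_N\cdot(N/d_j)\epsilon_j\ \oplus\ \bigoplus_{j=1}^g\mathbb Z_N\cdot d_j\,\epsilon_{g+j}$; since $d_j\mid N$ the $j$-th summand of the first sum has order $d_j$ and the $j$-th of the second has order $N/d_j$, so $|C_0|=\prod_j d_j(N/d_j)=N^g$. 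The only non-orthogonal pairs of basis directions are $(\epsilon_j,\epsilon_{g+j})$, which contribute $(N/d_j)\,d_j\,\eta(\epsilon_j,\epsilon_{g+j})=\pm N\equiv 0$ to the form on $C_0$; thus $C_0$ is isotropic, and being of size $N^g$ it is Lagrangian. Hence so is $C$.

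For ``only if'', let $C$ be Lagrangian, so $|C|=N^g$ by the observation above. First I pin down $\vec d$ from the group structure: $C=C^\perp\cong(M/C)^\vee\cong M/C$, and by the Smith-normal-form relation between the invariant factors of a submodule of $\mathbb Z_N^{2g}$ and of its quotient --- if $M/C\cong\bigoplus_{k=1}^{2g}\mathbb Z_{a_k}$ with $a_1\mid\cdots\mid a_{2g}\mid N$ then $C\cong\bigoplus_k\mathbb Z_{N/a_{2g+1-k}}$ --- the isomorphism $C\cong M/C$ forces $a_k a_{2g+1-k}=N$ for all $k$. Set $d_j:=a_j$ for $1\le j\le g$: then $d_1\mid\cdots\mid d_g\mid N$, and from $d_j\mid a_{2g+1-j}=N/d_j$ we get $d_j^2\mid N$, i.e.\ $2v_{p_i}(d_j)\le s_i$ for every $i$, which gives the bound in (\ref{tuples}). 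It remains to produce a symplectic basis adapted to $C$. For this I would invoke the symplectic elementary divisor theorem (exactly the content of \cite{albouy2008matrixreductionlagrangiansubmodules}, Theorem 7, and also provable by induction on $g$): $M$ has a symplectic $\mathbb Z_N$-basis $(e_j,f_j)_{j=1}^g$ with $C=\langle n_1e_1,\dots,n_ge_g,m_1f_1,\dots,m_gf_g\rangle$ for divisors $n_j\mid m_j\mid N$, $n_jm_j=N$. Reading off the invariant factors of $C$ from this presentation and comparing with the paragraph above forces $\{n_j\}_{j=1}^g=\{d_j\}_{j=1}^g$ as multisets (prime by prime: a symmetric multiset $\{t_j,s-t_j\}$ with all $t_j\le\lfloor s/2\rfloor$ determines $\{t_j\}$). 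Permuting the $g$ symplectic pairs we arrange $n_j=d_j$, $m_j=N/d_j$. Finally, the change of basis from the standard symplectic basis of $M$ to $(e_j,f_j)$ is the reduction of some $\gamma\in\Sp(2g,\mathbb Z)$, because $\Sp(2g,\mathbb Z)\to\Sp(2g,\mathbb Z_N)$ is surjective; in these coordinates $D_{\vec d}\,\gamma$ is a generator matrix for $C$.

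The one genuinely nontrivial ingredient is the symplectic elementary divisor theorem used in the last step; everything else is a cardinality count or standard Smith-normal-form bookkeeping. One can quote it verbatim from \cite{albouy2008matrixreductionlagrangiansubmodules}, or prove it by induction on $g$: lift $C$ to its preimage $\tilde C\supseteq N\mathbb Z^{2g}$ in $\mathbb Z^{2g}$ (so as to work over the PID $\mathbb Z$), choose a vector of $\tilde C$ of minimal content, check that its $\eta$-pairing with $\mathbb Z^{2g}$ generates the appropriate ideal, split off the resulting rank-$2$ symplectic summand, and recurse on its $\eta$-orthogonal complement --- the nesting of contents then delivers the divisibility chain $d_1\mid\cdots\mid d_g$. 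A smaller but real point is the multiset identification $\{n_j\}=\{d_j\}$, which I would settle by passing to the primary decomposition as indicated.
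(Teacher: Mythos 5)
Your proposal is sound where it is worked out, and in substance it takes the same route as the paper: the paper does not actually prove Theorem~\ref{theorem:sympl_codes} but imports it from \cite{albouy2008matrixreductionlagrangiansubmodules} (Theorem 7) and restates it, and your argument likewise funnels all of the difficulty into the ``symplectic elementary divisor theorem'', which you yourself identify as exactly that cited result. What you add beyond the citation is correct and worthwhile: the unimodularity of $\eta$ giving $|C^\perp|=N^{2g}/|C|$, hence ``isotropic of cardinality $N^g$ $\Leftrightarrow$ Lagrangian''; a complete check of the ``if'' direction; the derivation of the constraints (\ref{tuples}) (read, as intended, as $d_j^2\mid N$) from the invariant factors of $C\cong M/C$; and the observation that surjectivity of $\Sp(2g,\mathbb Z)\to\Sp(2g,\mathbb Z_N)$ is needed to land in $\Sp(2g,\mathbb Z)$ rather than $\Sp(2g,\mathbb Z_N)$, a point the paper passes over. (Cosmetic remark: the adapted basis you construct yields the generator $\mathrm{diag}(d_1,\dots,d_g,N/d_1,\dots,N/d_g)$, so one final symplectic block swap $e_j\mapsto f_j$, $f_j\mapsto -e_j$ is needed to reach the form (\ref{gen}).)

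If the closing induction sketch is meant to replace the citation, however, it is not yet a proof, and its one explicit ``check'' aims at the wrong pairing: for $v=d_1v'$ of minimal content, $\eta(v,\mathbb Z^{2g})=d_1\mathbb Z$ is automatic from primitivity of $v'$ and unimodularity of $\eta$, and carries no information about $C$. The step that actually uses $C=C^\perp$ is showing $\eta(v',\tilde C)=(N/d_1)\mathbb Z$ rather than a proper subgroup (if it were $c\,(N/d_1)\mathbb Z$ with $c>1$, then $(d_1/c)\,v'$ would lie in $\tilde C$ by maximal isotropy, contradicting minimality of the content), and, more substantially, upgrading a partner $u_1\in\tilde C$ with $\eta(u_1,v')=N/d_1$ to one of the exact form $(N/d_1)f_1$ with $(v',f_1)$ a symplectic pair and with the compatible splitting $\tilde C=\mathbb Z d_1v'\oplus\mathbb Z (N/d_1)f_1\oplus(\tilde C\cap H)$; that splitting is the real content of the elementary divisor theorem and is not addressed. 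Similarly, the divisibility chain $d_1\mid\cdots\mid d_g$ does not follow from ``nesting of contents'' alone for composite $N$ (size-minimality does not give divisibility); one should first reduce to prime-power $N$ by CRT, which you invoke only later for the multiset identification. None of this is fatal---it is the standard argument and can be completed---but as written the hard direction ultimately rests on \cite{albouy2008matrixreductionlagrangiansubmodules}, exactly as in the paper.
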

 Thus, Lagrangian submodules that are isomorphic as groups belong to the same symplectic orbit. This does not hold for general isotropic submodules. While two isomorphic isotropic submodules $C\cong C'\subseteq \mathbb Z_N^{2g}$ are trivially isometric (since the symplectic form vanishes), their isometry does not always extend to an isometry of $\mathbb Z_N^{2g}$. The Lagrangian condition is therefore essential for Theorem \ref{theorem:sympl_codes} to apply, as emphasized in \cite{albouy2008matrixreductionlagrangiansubmodules}.
 
 Theorem \ref{theorem:sympl_codes} provides a classification of symplectic stabilizer states in terms of $\Sp(2g,\mathbb Z)$ orbits, generalizing the results of \cite{Feng:2024fzh}. To make the notation less cumbersome, define
 \be \mathbb Z_{\vec d}\equiv\mathbb Z_{d_1}\times\cdots\times\mathbb Z_{d_g},\qquad \mathbb Z_{N/\vec d}\equiv\mathbb Z_{N/d_1}\times\cdots\times\mathbb Z_{N/d_g},\ee
 so that the submodule generated by \eqref{gen} is isomorphic to $\mathbb Z_{\vec d}\times\mathbb Z_{N/\vec d}$.

We begin by describing the \textit{free} Lagrangian submodules $C\subseteq \mathbb Z_N^{2g}$, which have rank $g$ and admit a basis of $g$ elements. These submodules can be represented by regular handlebodies. To see this, note that a free submodule defines a state invariant under insertions of all anyons along the $g$ independent cycles of $H_1(\Sigma_g,\mathbb Z_N)$ determined by its basis. In other words, $C$ renders $g$ cycles of the genus-$g$ surface contractible, thereby defining a handlebody. The mapping class group $\Sp(2g,\mathbb Z)$ acts transitively on the free Lagrangian submodules, permuting handlebodies. Its representation on the Hilbert space has kernel $\Gamma(N)$ and its image is $\Sp(2g,\mathbb Z)/\Gamma(N)\cong \Sp(2g,\mathbb Z_N)$, which is precisely the subgroup of the Clifford group described in section \ref{sp2g}. The state $|0\rangle^{\otimes g}$ corresponds to the free submodule $C=\{(0,a)\mid a\in\mathbb Z_N^g\}$, given by the choice $\vec d=\vec 1$ and $\gamma=\id$ in \eqref{gen}. Varying $\gamma$, we obtain the states that belong to the $\Sp(2g,\mathbb Z_N)$-orbit of $|0\rangle^{\otimes g}$.  Specifically, the free submodule $C\gamma=\{(0,a)\gamma \mid a\in\mathbb Z_N^g\}$ for $\gamma\in \Sp(2g,\mathbb Z_N)$ yields the stabilizer state $U_\gamma |0\rangle^{\otimes g}$.
However, not all elements of $\Sp(2g,\mathbb Z_N)$ generate distinct stabilizer states by acting on $|0\rangle^{\otimes g}$. The subgroup that fixes this state is the congruence subgroup
\be \Gamma_0(N)=\left\{\begin{pmatrix}
	A & B\\
	0 & D
\end{pmatrix}\in \Sp(2g,\mathbb Z_N)\right\}.\ee
Hence the distinct free stabilizer states correspond to cosets in $\Sp(2g,\mathbb Z_N)/\Gamma_0(N)$. In \cite{Singal:2022dqu}, the following explicit count was obtained
\be \#\text{ free stabilizer states }=N^{g(g+1)\over 2}\prod_{i=1}^{k}\prod_{j=0}^{g-1}(p_i^{-(g-j)}+1).\ee

When $N$ is not squarefree, non-free Lagrangian submodules of $\mathbb Z_N^{2g}$ also exist, which correspond to non-handlebody topologies. The simplest representative topology can be described as a singular handlebody as follows. Consider gluing a torus cylinder $\Sigma_1^{(1)}\times [0,1]$ to a torus handlebody $\mathcal V_1^{(2)}$ along their boundaries $\Sigma_1=\Sigma_1^{(1)}\times\{0\}$ and $\Sigma_2=\partial\mathcal V_1^{(2)}$ via a covering map $f$. Writing the torus as $S^1\times S^1$ with angular coordinates $(\theta,\phi)$, define $f:\Sigma_1\to \Sigma_2$ by $f(e^{i\theta},e^{i\phi})=(e^{i{N\over d}\theta},e^{id\phi})$. The resulting glued manifold $\mathcal V_1^d=\mathcal V_{1}^{(2)}\cup_f (\Sigma_{1}^{(1)}\times [0,1])$ has first homology
\be H_1(\mathcal V_1^d,\mathbb Z_{N})\cong {\mathbb Z_{N}^2\over \langle (N/d,0),(0,d)\rangle}\cong \mathbb Z_{N/d}\times\mathbb Z_{d}.\ee
Contracting the gluing surface to a line, this topology can be thought as a handlebody with a non-invertible (unless $d=1$) defect line insertion along its non-contractible cycle. These are condensation defects \cite{Roumpedakis:2022aik} that can be constructed by gauging $(N/d)\G$ in a torus handlebody, or ``higher-gauging" the same group on a torus surface and then shrinking this surface to a line.

In general, a non-free Lagrangian submodule $C\subseteq\mathbb Z_N^{2g}$ defines a class of 3-manifolds with $H_1(M,\mathbb Z_N)\cong {\mathbb Z_N^{2g}}/C\cong C$. The corresponding stabilizer state is the state obtained by the TQFT path integral on such a manifold. These spaces can equivalently be represented by singular handlebodies with insertions of defect lines along their non-contractible cycles (see figure \ref{defectsfig}).
Each $\vec d\neq \vec 1$ specifies a space $\mathcal V_g^{\vec d}$ with $H_1(\mathcal V^{\vec d}_g,\mathbb Z_N)=\mathbb Z_{\vec d}\times\mathbb Z_{N/\vec d}$. The module (\ref{gen}) for $\gamma=\id$ is stabilized by the subgroup
\be \Gamma_{\vec d}=\left\{\begin{pmatrix}
	A & B\\
	C & D
\end{pmatrix}\in \Sp(2g,\mathbb Z_N)\bigg|\operatorname{diag}(d_1,\dots,d_g)C=\operatorname{diag}(N/d_1,\dots,N/d_g)B=0\right\} ,\ee
so the distinct topologies in the orbit correspond to the cosets $\Sp(2g,\mathbb Z_N)/\Gamma_{\vec d}$. Counting these cosets is significantly more involved than in the free case $\vec d=\vec 1$.

For $\gamma=\id$, since \eqref{gen} is diagonal, the corresponding state factorizes as
\begin{equation}\label{stabstate}
	|\Omega_{\vec d}\rangle=\bigotimes_{i=1}^g \frac{1}{\sqrt{|\G_{N/d_i}|}}\sum_{a\in \G_{N/d_i}} |a\rangle
\end{equation}
where $\G_{N/d_i}=(N/d_i)\G\subseteq \G$. We chose to normalize the states such that $\langle \Omega_{\vec d}|\Omega_{\vec d}\rangle=1$ and fixed the overall phase such that $\langle 0^g|\Omega_{\vec d}\rangle>0$. The rest of the states in the orbit are given by
\be |\Omega_{\vec d};\gamma\rangle\equiv U_\gamma|\Omega_{\vec d}\rangle,~~\gamma\in \Sp(2g,\mathbb Z_N)/\Gamma_{\vec d}.\ee
Thus, the tuple $\vec d$ defines the first homology group up to tensoring with $\mathbb Z_N$, while $\gamma\in \Sp(2g,\mathbb Z_N)/\Gamma_{\vec d}$ defines the inclusion of $\Sigma_g$ into the boundary of $\mathcal V_g^{\vec d}$.

\begin{figure}\label{defectsfig}
	$$|\Omega_{\vec d}\rangle=|\Omega_{d_1,\dots,d_g}\rangle=\defect{d_1}~\#\cdots\#~\defect{d_g}$$
	\caption{A representation of the topology $\mathcal V_g^{\vec d}$, corresponding to (\ref{gen}) with $\gamma=\id$, as a singular handlebody. The dashed line labeled $d_i$ is a condensation defect that can absorb and emit anyons with charges in $(N/d_i)\G$, modifying the first homology of the $i$-th factor to $\mathbb Z_{N/d_i}\times\mathbb Z_{d_i}$. The stabilizer state $|\Omega_{\vec d};\gamma\rangle$ can be obtained by acting with the mapping class group element $\gamma\in \Sp(2g,\mathbb Z)$ on the above configuration. }
\end{figure}

Finally, it is useful to collect all tuples $\vec d$ from Theorem \ref{theorem:sympl_codes} into a set $\mathbb T$. This set parametrizes the symplectic orbits and naturally forms a lattice defined as follows.
\begin{definition}\label{deflattice}
	Let $\mathbb T$ denote the set of tuples $\vec d$ satisfying (\ref{tuples}). For $\vec d,\vec d'\in\mathbb T$, with $\vec d=(d_1,\dots,d_g)$ and $\vec d'=(d_1',\dots,d_g')$, define the operations $\land$ and $\lor$ as follows
	\be \vec d\land \vec d'=(\operatorname{lcm}(d_1,d_1'),\cdots,\operatorname{lcm}(d_g,d_g')),\ee
	\be \vec d\lor \vec d'=(\gcd(d_1,d_1'),\cdots,\gcd(d_g,d_g')).\ee
\end{definition}
The cardinality of $\mathbb T$ (number of symplectic orbits) is given by the next proposition.
\begin{proposition}\label{orbitscount}
	Let $N$ and $k$ be as in theorem \ref{theorem:sympl_codes}. Then, $|\mathbb T|$ is equal to the number of distinct choices of $\vec d$. Specifically,
	\be |\mathbb T|=\prod_{i=1}^k \binom{\lfloor s_i/2\rfloor+g}{g}.\label{stabcount}\ee
\end{proposition}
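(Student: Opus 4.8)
The plan is to use the prime factorization $N=\prod_{i=1}^k p_i^{s_i}$ to split the count into one independent problem per prime. By unique factorization, any integer $d$ with $1\le d\le \prod_i p_i^{\lfloor s_i/2\rfloor}$ that is built only from the primes dividing $N$ can be written uniquely as $d=\prod_{i=1}^k p_i^{e_i}$, and the constraint $d\le \prod_i p_i^{\lfloor s_i/2\rfloor}$ is equivalent to the per-prime constraints $0\le e_i\le \lfloor s_i/2\rfloor$. Applying this to each entry $d_j=\prod_i p_i^{e_{ij}}$ of a tuple $\vec d$, the divisibility chain $d_1\mid d_2\mid\cdots\mid d_g$ is equivalent, prime by prime, to $e_{i1}\le e_{i2}\le\cdots\le e_{ig}$ for every $i$, while the final condition $d_g\mid N$ holds automatically since $\lfloor s_i/2\rfloor\le s_i$. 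Hence specifying an element of $\mathbb T$ is the same as choosing, independently for each $i=1,\dots,k$, a non-decreasing sequence $0\le e_{i1}\le\cdots\le e_{ig}\le \lfloor s_i/2\rfloor$, so $|\mathbb T|=\prod_{i=1}^k M_i$, where $M_i$ counts such sequences for the prime $p_i$.

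It then suffices to evaluate $M_i$. Writing $m=\lfloor s_i/2\rfloor$, a non-decreasing length-$g$ sequence with entries in $\{0,1,\dots,m\}$ is the same datum as a size-$g$ multiset drawn from the $(m+1)$-element set $\{0,\dots,m\}$, so by stars-and-bars $M_i=\binom{(m+1)+g-1}{g}=\binom{m+g}{g}$. Substituting $m=\lfloor s_i/2\rfloor$ and multiplying over $i$ gives $|\mathbb T|=\prod_{i=1}^k\binom{\lfloor s_i/2\rfloor+g}{g}$, which is the claimed formula. (If one prefers an explicit bijection over the stars-and-bars appeal, map $(e_{i1},\dots,e_{ig})$ with $e_{i1}\le\cdots\le e_{ig}$ to the strictly increasing tuple $(e_{i1},e_{i2}+1,\dots,e_{ig}+g-1)$ with values in $\{0,\dots,m+g-1\}$, which is counted by the same binomial.)

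I do not expect a genuine obstacle here: the argument is essentially bookkeeping. The only points that deserve a line of care are confirming that the bound $1\le d_j\le\prod_i p_i^{\lfloor s_i/2\rfloor}$ is captured \emph{exactly} by the per-prime inequalities $e_{ij}\le\lfloor s_i/2\rfloor$ (so that the product decomposition over primes neither drops nor double-counts tuples), and noting explicitly that $d_g\mid N$ is vacuous given those bounds. Everything else follows from unique factorization and an elementary count of monotone sequences.
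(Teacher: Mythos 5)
Your proposal is correct and takes essentially the same route as the paper: reduce to independent per-prime exponent data and count non-decreasing length-$g$ sequences with values in $\{0,1,\dots,\lfloor s_i/2\rfloor\}$ by stars-and-bars, giving $\prod_{i=1}^k\binom{\lfloor s_i/2\rfloor+g}{g}$. The one point to fix is your claimed equivalence between the bound in (\ref{tuples}) and the per-prime constraints: read as a literal numerical inequality it is \emph{not} equivalent for $k\ge 2$ (e.g.\ $N=3^2\cdot 5^4$ gives the bound $75$, and $d_j=9\le 75$ divides $N$ yet has $3$-adic exponent $2>\lfloor s_1/2\rfloor=1$); the condition is to be understood as the divisibility $d_j\mid \prod_i p_i^{\lfloor s_i/2\rfloor}$ (equivalently $d_j^2\mid N$), which is what the classification of orbits requires and under which your per-prime decomposition and count are exactly right.
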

\begin{proof}
	The exponents of $p_i$ in $\vec d$ form a non-decreasing sequence taking values $0,1,\dots,\lfloor s_i/2\rfloor$. The number of such sequences is $\binom{\lfloor s_i/2\rfloor+g}{g}$.
\end{proof}

\subsection{Classification of $\q$-Lagrangian submodules and topological boundary conditions}\label{sec:classification}

We now turn our attention to the $\q$-Lagrangian submodules, which are submodules of the form $\mathcal L\otimes \mathbb Z_N^{2g}$, where $\mathcal L\subset\G$ is Lagrangian with respect to $\q$. These submodules describe topological boundary conditions (TBC), which in turn describe 2D CFT partition functions on $\Sigma_g$. The factor $\mathcal L\subset\G$ is a classical even, self-dual code \cite{Barbar:2023ncl}, whose relation to Narain CFTs has been studied extensively in the recent literature \cite{Dymarsky:2020qom,Angelinos_thesis,Dymarsky:2020bps,Dymarsky:2020pzc,Yahagi:2022idq,Furuta:2022ykh,Henriksson:2022dnu,Angelinos:2022umf,Henriksson:2022dml,Dymarsky:2022kwb,Kawabata:2022jxt,Furuta:2023xwl,Alam:2023qac,Kawabata:2023iss,Ando:2024gcf,Barbar:2023ncl,Aharony:2023zit,Dymarsky:2024frx,Kawabata:2025hfd,AngelinosWZW,Dymarsky:2025agh} and will be very briefly reviewed in the next subsection. Since these submodules contain the full $\mathbb Z_N^{2g}$ factor, their stabilizer states are manifestly invariant under $\Sp(2g,\mathbb Z_N)$. In this subsection we will focus on groups of even length of the form $\G=\mathbb Z_N^{2n}$. We will provide a classification of these submodules and show that they are linearly independent as long as $n\leq g$. This discussion can be generalized to general finite abelian groups, but we will not attempt to treat the general case rigorously here.

Without loss of generality we can consider $N=p^m$. Since, by assumption, the quadratic form $\q$ is non-degenerate and $p$ is odd prime, there are exactly two inequivalent choices of $\q$, determined by whether $\det(\g)$ is a square or non-square in $\mathbb Z_N$  \cite{WALL1963281}, where $\g$ is the bilinear form associated with $\q$. We will focus on the case where $\g$ can be brought into the antidiagonal form
\be \g=\begin{pmatrix}
	0      & 0      & \cdots & 0      & 1 \\
	0      & 0      & \cdots & 1      & 0 \\
	\vdots & \vdots & \ddots& \vdots & \vdots \\
	0      & 1      & \cdots & 0      & 0 \\
	1     & 0      & \cdots & 0      & 0
\end{pmatrix}.\label{antidiag}\ee
Specifically, $\g$ lies in the square residue class, except when $n$ is odd and $p=3\mod 4$, in which case $\g$ belongs to the non-square residue class. In particular, the untwisted Dijkgraaf-Witten theory \cite{Dijkgraaf:1989pz} of gauge group $\mathbb Z_N^n$ belongs to this class.

Our first goal will be to show that the set of $\q$-Lagrangian stabilizer states $|\mathcal L_i\rangle$ forms a basis for the $\Sp(2g,\mathbb Z)$-invariant subspace of $\mathcal H$, as long as $n\leq g$. The central result of \cite{NRS} implies that the $\q$-Lagrangian stabilizer states span this subspace. To show linear independence, we use the notion of the \textit{minimal matrix}, which is an $n\times n$ matrix that uniquely specifies a Lagrangian submodule of $\G$ with respect to $\q$ (see appendix \ref{appB} for definition and proof).
\begin{proposition}\label{indep}
	If $g \geq n$, then the set of all $\q$-Lagrangian stabilizer states is linearly independent.
\end{proposition}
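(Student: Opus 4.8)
The plan is to show directly that no nontrivial linear combination $\sum_{\mathcal L}c_{\mathcal L}|\mathcal L\rangle$ of the $\q$-Lagrangian stabilizer states can vanish, by extracting from the ``coefficient equations'' an explicit triangular subsystem indexed by the Lagrangians themselves. First I would record that the state attached to $\mathcal L\otimes\mathbb Z_N^{2g}$ (with $v=0$) is phase-free and factorizes over the $g$ tori: identifying $\mathcal L\otimes\mathbb Z_N^{2g}$ with $\mathcal L^{2g}\subseteq\G^{2g}$ and using $\W(\alpha)=\bigotimes_j\W(\alpha_j)$, the projector (\ref{sstate}) is the $g$-th tensor power of $\tfrac1{|\G|}\sum_{(a,b)\in\mathcal L^{2}}\W(a,b)$, which one evaluates in the computational basis via $\W(a,b)=\omega^{2^{-1}\g(a,b)}X_aZ_b$ and the isotropy $\g|_{\mathcal L\times\mathcal L}=0$ to obtain the genus-one state $|\psi_{\mathcal L}\rangle=|\mathcal L|^{-1/2}\sum_{x\in\mathcal L}|x\rangle$. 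Hence $|\mathcal L\rangle=|\mathcal L|^{-g/2}\sum_{(x_1,\dots,x_g)\in\mathcal L^{g}}|x_1,\dots,x_g\rangle$ is the uniform indicator of $\mathcal L^{g}$. Pairing a putative relation with $|x_1,\dots,x_g\rangle$, and using that all $\q$-Lagrangians share the order $N^{n}$ so the normalization drops out, the equations read $\sum_{\mathcal L\supseteq H}c_{\mathcal L}=0$ for every subgroup $H\subseteq\G$ generated by at most $g$ elements; since $g\ge n$ these include all $H$ generated by at most $n$ elements, so it suffices to show the inclusion matrix $\big(\mathbf 1[H\subseteq\mathcal L]\big)$ — rows the $\le n$-generated subgroups, columns the $\q$-Lagrangians — has full column rank.

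To build the triangular system, write $\G=A\oplus B$ with $A\cong B\cong\mathbb Z_N^n$ and $\g$ the hyperbolic pairing bringing (\ref{antidiag}) to standard form, and order the $\q$-Lagrangians by a total refinement $\prec$ of the dominance order on their abelian-group isomorphism type (a partition of $mn$). To each $\mathcal L$ I would attach, using the minimal matrix of appendix \ref{appB}, a test subgroup $H_{\mathcal L}\subseteq\mathcal L$ generated by at most $n$ elements: if $\mathcal L$ is itself generated by $\le n$ elements, take $H_{\mathcal L}=\mathcal L$, which is then the unique Lagrangian containing it (Lagrangians are maximal isotropic of order $N^{n}$); otherwise, merge the up-to-$2n$ cyclic generators of $\mathcal L$ into $n$ elements guided by the $n\times n$ minimal matrix, so that every Lagrangian $\mathcal L'\supseteq H_{\mathcal L}$ other than $\mathcal L$ has strictly larger group type and $\mathcal L$ is the only Lagrangian of its own type containing $H_{\mathcal L}$. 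The relations for the $H_{\mathcal L}$, ordered by $\prec$, then form an upper-triangular matrix with unit diagonal, whose invertibility forces all $c_{\mathcal L}=0$.

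Everything up to the free / ``$\le n$ generators'' case is routine. The real difficulty is the non-free case: constructing, for a $\q$-Lagrangian that may genuinely require $2n$ generators as an abelian group, a subgroup generated by only $n$ elements that nonetheless pins the Lagrangian down in the triangular sense above. This is exactly where the hypothesis $g\ge n$ is used, and it is the structural reason the minimal matrix is an $n\times n$ object rather than a $2n\times 2n$ one. A minor auxiliary ingredient is the standard fact that a nondegenerate \emph{cyclic} quadratic module over $\mathbb Z_{p^m}$ has a unique Lagrangian, together with monotonicity of the group type under the inclusions that occur, which is what makes the ordering $\prec$ and its tie-breaking legitimate.
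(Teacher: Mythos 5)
Your setup is sound and matches what the paper uses implicitly: the $\q$-Lagrangian state is the uniform indicator of $\mathcal L^g$ in the computational basis, all Lagrangians have the same order $N^n$ so normalizations drop out, and pairing a relation $\sum_{\mathcal L}c_{\mathcal L}|\mathcal L\rangle=0$ with product basis vectors gives the equations $\sum_{\mathcal L\supseteq H}c_{\mathcal L}=0$ for every subgroup $H$ generated by at most $g\ge n$ elements. The problem is that your argument stops exactly where the content of the proposition begins: the existence, for each non-free Lagrangian $\mathcal L$, of a test subgroup $H_{\mathcal L}\subseteq\mathcal L$ generated by at most $n$ elements with the ``pinning'' property (every other Lagrangian containing $H_{\mathcal L}$ has strictly larger group type, and $\mathcal L$ is the unique one of its own type containing it) is asserted, not constructed, and you yourself flag it as ``the real difficulty.'' This is not a routine verification, because a generic choice of $n$ elements of $\mathcal L$ fails: for $n=1$, $\G=\mathbb Z_{p^2}^2$ with the hyperbolic pairing, the non-free Lagrangian $p\G=\langle(p,0),(0,p)\rangle$ contains the cyclic subgroup $\langle(p,0)\rangle$, which also sits inside the free Lagrangian $\langle(1,0)\rangle$ of the same order $N^n$; only a choice such as $\langle(p,p)\rangle$ singles out $p\G$. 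So the ``merge the $2n$ generators into $n$'' step needs an actual construction, and the monotonicity-of-type claim that makes your ordering $\prec$ triangular needs a proof; neither is supplied, so the proposal has a genuine gap at its core.

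The missing ingredient is precisely what the paper's Appendix~\ref{appB} is for. The minimal matrix of $\mathcal L_j$ is an $n\times 2n$ matrix whose $n$ rows $v_1,\dots,v_n$ lie in $\mathcal L_j$, and the accompanying lemma (proved by an inductive reduction using that $\mathcal L_j=\mathcal L_j^{\perp}$ forces vectors like $(0,\dots,0,p^{m-k_1})$ into the module) is invoked to guarantee that these rows determine $\mathcal L_j$ among all Lagrangians. The paper then pairs the relation with $\langle v_1|\cdots\langle v_n|\otimes\id^{\otimes(g-n)}$ — this is where $g\ge n$ enters — and every other $|\mathcal L_i\rangle$ is annihilated outright, so the resulting system is diagonal rather than merely triangular and no ordering by group type is needed. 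Your strategy is the same in spirit (use $n$ of the $g$ slots to test against an $n$-generated subgroup of each Lagrangian), but without an explicit construction of $H_{\mathcal L}$ and a proof of its distinguishing property, the linear-independence claim is not established.
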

\begin{proof}
	Let $\{|\mathcal L_i\rangle\}$ denote the set of all $\q$-Lagrangian stabilizer states.  
	Suppose that for some $g \geq n$ we have
	$
	\sum_i c_i |\mathcal L_i\rangle = 0.
	$
	Let $M$ be the minimal matrix of $\mathcal L_j$ with rows $v_1,\dots,v_n$. Act with
	$
	\big(\langle v_1|\langle v_2|\cdots \langle v_n|\otimes \id^{\otimes (g-n)}\big)
	$
	on the relation above. Since $M$ uniquely determines $\mathcal L_j$, this annihilates all states except $|\mathcal L_j\rangle$, yielding
	$
	c_j=0
	$.
\end{proof}

The methods of \cite{albouy2008matrixreductionlagrangiansubmodules} can be straightforwardly adapted to the bilinear form (\ref{antidiag}) to imply the following classification.
\begin{theorem}[Classification of orthogonal Lagrangian submodules]
Let $\mathcal L$ be a submodule of $\mathbb Z_N^{2n}$ and let $N=\prod_{i=1}^k p_i^{s_i}$ be the prime factorization of $N$. Define an $n$-tuple $\vec q=(q_1,\dots,q_n)$ with
\begin{equation}\label{tuples2}
	q_1\mid q_2\mid\cdots\mid q_n\mid N, \quad 1\leq d_j\leq \prod_{i=1}^k p_i^{\lfloor s_i/2\rfloor},\quad j=1,\dots,n.
\end{equation}
	Then $\mathcal L$ is a $\q$-Lagrangian submodule if and only if there exists $Q\in\O(\G)$ such that
	\be \operatorname{diag}(q_1,\dots,q_{n},N/q_{n},\dots,N/q_1)\times Q,\label{eseven}\ee
	is a generator matrix for $\mathcal L$.	
	\label{theorem:Orthogonal_codes}
\end{theorem}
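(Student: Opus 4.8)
The plan is to adapt the matrix-reduction argument of \cite{albouy2008matrixreductionlagrangiansubmodules} (their Theorem 7, quoted here as Theorem \ref{theorem:sympl_codes}) from the symplectic form $\eta$ to the orthogonal form $\g$ in the antidiagonal normal form \eqref{antidiag}. The key structural observation is that $\g$ is itself a symmetric ``split'' form, i.e. $\mathbb Z_N^{2n}$ with $\g$ is a hyperbolic quadratic module, so the same Smith-normal-form / row-echelon machinery applies, with the symplectic group $\Sp(2g,\mathbb Z)$ replaced by the orthogonal group $\O(\G)$ of $(\G,\q)$. First I would establish the easy direction: given the $n$-tuple $\vec q$ satisfying \eqref{tuples2}, the submodule generated by the rows of $\operatorname{diag}(q_1,\dots,q_n,N/q_n,\dots,N/q_1)$ is checked directly to be $\q$-isotropic (the pairing of the $i$-th row with the $j$-th row is $q_i\cdot(N/q_j)\,\g_{i,2n+1-j}$, which vanishes mod $N$ whenever $q_i \mid q_j$, using the antidiagonal shape of $\g$) and to have the right order $\prod_i q_i\cdot(N/q_i)=N^n=\sqrt{|\G|}$, hence Lagrangian; acting by any $Q\in\O(\G)$ preserves both properties, giving one direction.

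For the converse — every $\q$-Lagrangian $\mathcal L$ arises this way — I would run the Albouy-type reduction. Start from an arbitrary generator matrix for $\mathcal L$; use elementary row operations (which only change the presentation of $\mathcal L$, not $\mathcal L$ itself) to bring it to a canonical diagonal-of-invariant-factors shape, obtaining a list of elementary divisors $q_1 \mid \cdots \mid q_n$ together with complementary divisors; then the Lagrangian (self-duality) condition $\mathcal L = \mathcal L^{\perp_\q}$ forces the complementary invariant factors to be exactly $N/q_n,\dots,N/q_1$, so that $|\mathcal L|=N^n$ and the orders pair up as in \eqref{eseven}. The remaining freedom — the choice of how this diagonal lattice sits inside $(\mathbb Z_N^{2n},\g)$ — is then shown to be exactly an orthogonal transformation: any two Lagrangians with the same invariant-factor type are related by an element of $\O(\G)$. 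This is the analogue of the statement in Theorem \ref{theorem:sympl_codes} that isomorphic Lagrangians lie in a single $\Sp$-orbit, and it is where Witt's extension theorem for the hyperbolic quadratic module over $\mathbb Z/p^m$ (odd $p$) does the work: an isometry between two isotropic sublattices extends to an isometry of the ambient module.

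The main obstacle I anticipate is precisely this transitivity step over $\mathbb Z/p^m$ rather than over a field: Witt cancellation and Witt extension require care for quadratic modules over $\mathbb Z/p^m$ when the sublattices are not free (i.e. when $N$ is not squarefree and the $q_i$ are nontrivial proper divisors), since a sub\emph{module} that is not a direct summand need not split off an orthogonal complement. The resolution is to filter by the $p$-adic valuation: decompose $\mathcal L$ according to the orders of a good generating set, peel off the ``primitive'' (order-$N$) part — which spans a nondegenerate hyperbolic subspace and does split — and induct on the complementary, lower-exponent piece, exactly mirroring how \cite{albouy2008matrixreductionlagrangiansubmodules} handles the non-free symplectic case. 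I would also note explicitly, as the theorem statement implicitly assumes, that $\g$ is restricted to the square-residue class \eqref{antidiag} (the other class, relevant when $n$ is odd and $p\equiv 3\bmod 4$, is anti-isometric to it and the same argument goes through with the roles of the two residue classes swapped, producing the identical tuple classification); and that the hypothesis $n\le g$ plays no role here — it enters only in Proposition \ref{indep} for linear independence, not in the classification of the submodules themselves.
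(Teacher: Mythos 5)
Your proposal matches the paper's approach: the paper gives no independent argument for this theorem, only the remark that the matrix-reduction method of \cite{albouy2008matrixreductionlagrangiansubmodules} adapts straightforwardly to the antidiagonal form \eqref{antidiag}, and that adaptation is exactly what you sketch (easy direction by direct check of isotropy and order $N^n$, converse by invariant-factor reduction plus transitivity of $\O(\G)$ on Lagrangians of a fixed type, with the non-free case handled by the valuation filtration mirroring the symplectic argument). Your interim phrasing of Witt extension (``an isometry between two isotropic sublattices extends to the ambient module'') is too strong as literally stated --- the paper itself stresses that isomorphic isotropic submodules need not be related by a global isometry --- but you correct this yourself in the next paragraph by invoking the Lagrangian property and the peel-off/induction step, so the argument is in line with the paper's intended proof.
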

The classification of orbits for an equivalent bilinear form $\g'$ on $\G$ can be obtained by acting on (\ref{eseven}) with the automorphism of $\G$ that brings \eqref{antidiag} to $\g'$.

As in the symplectic case, the Lagrangian property is vital to the classification (\ref{eseven}), as two isomorphic isotropic submodules are not necessarily related by a global isometry.

Since (\ref{eseven}) for $Q=\id$ is diagonal, its stabilizer state is a factorized state given by
\be |\mathcal L_{\vec q,\id}\rangle=\left(\sum_{(a_i,b_i)\in\mathbb Z_{N/q_i}\times\mathbb Z_{q_i}} |a_1q_1,\dots,a_nq_n,b_1N/q_n,\dots,b_n N/q_1\rangle\right)^{\otimes g}.\label{qstate}\ee
We shall also denote these states as $|\mathcal L_{\vec q}\rangle$ by dropping the $\id$ in the subscript.
We can then obtain the general stabilizer state of (\ref{eseven}) by
\be |\mathcal L_{\vec q,Q}\rangle= U_{Q}|\mathcal L_{\vec q}\rangle,\quad Q\in\O(\G).\ee
We normalize the $\q$-Lagrangian stabilizer states such that $\langle 0^g|\mathcal L_{\vec q,Q}\rangle=1$, where $0^g$ denotes the identity of the group $\G^g$. With this normalization, they map straightforwardly to a CFT partition function.

We now define the set $\mathbb S$ containing all tuples in (\ref{tuples2}) classifying the orthogonal orbits. Analogously to $\mathbb T$, this set has a lattice structure.
The counting of orthogonal stabilizer states for $\G=\mathbb Z_N^{2n}$ is given by a formula analogous to (\ref{stabcount}) (using the notation of theorem \ref{theorem:Orthogonal_codes})
\be |\mathbb S|=\prod_{i=1}^k \binom{\lfloor s_i/2\rfloor+n}{n}.\label{stabcount2}\ee

\subsection{CFT partition functions from $\q$-Lagrangian submodules}\label{sec:CFT}

We now briefly review the 2D conformal field theories (CFTs) that are dual to the 3D $U(1)^r$ CS theories described by (\ref{CSaction}).  
These theories are bosonic lattice CFTs, with each TBC of the CS theory specifying a modular-invariant partition function of the corresponding CFT \cite{Barbar:2023ncl}. The CFT partition function can be obtained by the sandwich construction \cite{Freed:2022qnc} as follows. Consider $\mathcal T$ on the cylinder $\Sigma_g\times [0,1]$ with TBC specified by $\mathcal L_{\vec q,Q}$ imposed at $\Sigma\times\{0\}$ and conformal boundary condition at $\Sigma_g\times\{1\}$. The states on the two boundaries are respectively $|\mathcal L_{\vec q,Q}\rangle$ and $\langle \Omega|$, where by $\Omega$ we denote collectively the parameters specifying the conformal boundary condition. Since the bulk theory is topological, the interval can be collapsed, resulting in the partition function
\be Z_{\vec q,Q}(\Omega)=\langle \Omega|\mathcal L_{\vec q,Q}\rangle.\ee

As discussed after (\ref{CSaction}), the $K$-matrix in the CS action is the Gram matrix of an even integral lattice $\Lambda$.  The anyon basis $\{|a\rangle,a\in\G\}$ of the TQFT maps to the ``non-holomorphic blocks'' $\langle \Omega|a\rangle$ of the CFT \cite{Belov:2005ze}, corresponding to the cosets $\Lambda^\perp / \Lambda\cong \G$.  
Given a TBC specified by $\mathcal L_{\vec q,Q}$, one constructs the associated Narain lattice $\Lambda_{\mathcal L_{\vec q,Q}}$ via \emph{construction~A} \cite{SPLAG} applied to the pair ($\mathcal L_{\vec q,Q}$, $\Lambda$) \cite{Angelinos:2022umf,Aharony:2023zit}. 
The CFT partition function $Z_{\vec q,Q}$ is given by the  genus-$g$  Siegel-Narain theta series of $\Lambda_{\mathcal L_{\vec q,Q}}$, normalized appropriately (explicit formulas may be found in \cite{Henriksson:2022dnu,Benini:2022hzx}).  
When the chiral central charge satisfies $c = \pm 8 \pmod{24}$, modular invariance requires an additional factor of the $(E_8)_1$ partition function in the left- or right-moving sector. Thus, for a fixed $\Lambda$, each Lagrangian $\mathcal L_{\vec q,Q}\subseteq \G$ gives rise to a CFT partition function $Z_{\vec q,Q}$.

\section{The Quantum Gravity Path-Integral\label{sec:TQFTgrav}}

The quantum gravity path integral for a topological theory involves a sum over topologies. According to \cite{Dymarsky:2024frx}, the path integral of an abelian TQFT $\mathcal T$ on any topology with $\Sigma_g$ boundary is a stabilizer state described by an $\eta$-Lagrangian submodule. 
For a fixed $\mathcal T$, this partitions all the possible topologies into a finite number of equivalence classes, with the path integral of $\mathcal T$ preparing the same state within each class. We can thus reframe the gravitational path integral as a weighted sum over the representative topologies defined in subsection \ref{etaLagr}.

As discused in subsection \ref{etaLagr}, the representative topologies are parametrized by $\vec d\in\mathbb T$ and $\gamma\in \Sp(2g,\mathbb Z_N)/\Gamma_{\vec d}$. The $g$-tuple $\vec d$ describes the first homology group of the space, while $\gamma$ specifies an inclusion map of the surface $\Sigma_g$ into $\partial\mathcal V_g^{\vec d}$. Since we require consistency with holography, it is reasonable to assume that for each fixed $\vec d$, the sum should include a uniform average over the symplectic orbit, so that the resulting state is invariant under the mapping class group of $\Sigma_g$. With this assumption, the resulting state can be expanded in terms of TBC states, which are $\q$-Lagrangian stabilizer states. The latter expression can be interpreted as a weighted average over 2D CFT partition functions \cite{Barbar:2023ncl}.

If $N$ is squarefree, this construction is particularly simple. All $\eta$-Lagrangian stabilizers belong to a single symplectic orbit and they describe regular handlebody topologies $\mathcal V_g$ with $H_1(\mathcal V_g,\mathbb Z_N)=\mathbb Z_N^g$. The uniform average over these regular handlebodies gives a well-defined ensemble of 2D CFTs valid at any genus \cite{Dymarsky:2025agh}.

If $N$ is not square-free, there also exist $\eta$-Lagrangian submodules that describe non-handlebody topologies. Summing only over the handlebodies leads to an ill-defined ensemble of CFT partition functions. This issue is rectified by averaging over \textit{all} representative topologies, with an appropriate measure. As emphasized earlier, within each symplectic orbit we average uniformly, so that the resulting state is $\Sp(2g,\mathbb Z_N)$-invariant. We define the TQFT gravity path integral as a weighted average over all representative topologies of boundary $\Sigma_g$ defined by the $\eta$-Lagrangian stabilizer states
\be \mathcal Z_{gravity}[\mathcal T,\Sigma_g;w]\equiv\sum_{\vec d\in\mathbb T}w_{\vec d}\avg{|\Omega_{\vec d}\rangle}.\label{gravityZdef}\ee
Above $\avg{\cdot}$ denotes a uniform average over the symplectic orbit, which will be defined precisely in subsection \ref{topavg}.
We call $w_{\vec d}$ the \textit{bulk weights}.

In subsection \ref{topavg} we express the path integral (\ref{gravityZdef}) as follows
\be \mathcal Z_{gravity}[\mathcal T,\Sigma_g;w]=\sum_{\vec q\in\mathbb S}m_{\vec q}|\overline{\mathcal L_{\vec q}}\rangle,\label{ZZg}\ee
where $\vec q\in\mathbb S$ parametrizes the $\O(\G)$-orbits of TBCs and $|\overline{\mathcal L_{\vec q}}\rangle$ denote the uniform average of TBCs within an orbit. We call the coefficients $m_{\vec q}$ the \textit{boundary weights}. 

In subsection \ref{DSpresc} we review the genus-reduction prescription of \cite{Dymarsky:2024frx}, which fixes the boundary weights and consequently also fixes\footnote{The bulk weights are not fixed uniquely when $|\mathbb T|<|\mathbb S|$, since the $\eta$-Lagrangian stabilizer states are linearly dependent.} the bulk weights $w_{\vec d}$.

\subsection{Average over topologies}\label{topavg}

In this subsection we derive the map between the bulk and boundary weights. We will assume that $\G=\mathbb Z_N^{2n}$, but this calculation can be applied with minor modifications to general finite abelian group. This method was first used in \cite{Castro:2011zq} to calculate the Poincar\'e series of the vacuum in some Virasoro minimal models on a torus. It was also used in \cite{Raeymaekers:2021ypf,Meruliya:2021utr} for different RCFTs, and more recently in \cite{Nicosanti:2025xwu} for abelian TQFTs with general orientable boundaries.

We start by defining the mapping class group average of a seed state in the Hilbert space of a genus-$g$ surface $|\rho_{seed}\rangle\in\mathcal H$ as follows
\be \avg{|\rho_{seed}\rangle}\equiv {|\G|^g\over |Sp(2g,\mathbb Z_{N})|}\sum_{\gamma\in Sp(2g,\mathbb Z_{N})}U_\gamma |\rho_{seed}\rangle.\label{defrhoavg}\ee
By construction, this expression is $\Sp(2g,\mathbb Z)$-invariant and it can be expanded in terms of TBC states, according to subsection \ref{sec:classification}. Our goal will be to calculate the expansion coefficients
\be \avg{|\rho_{seed}\rangle}=\sum_i c_i|\mathcal L_i\rangle.\ee
Taking inner product with $\langle \mathcal L_i|$ on both sides of the equation we obtain
\be \langle \mathcal L_j|\rho_{\text{seed}}\rangle=\sum_{i=0}^m c_i D_{ij},~~~D_{ij}\equiv {1\over |\G|^g}|\mathcal L_i\cap \mathcal L_j|^g ,\label{eqscD}\ee
where we used the fact that $|\mathcal L_j\rangle$ is $\Sp(2g,\mathbb Z_N)$-invariant to simplify the LHS and we used $\langle\mathcal L_j| \mathcal L_i\rangle=|\mathcal L_i\cap \mathcal L_j|^g$ on the RHS, where $\mathcal L_i\subset\G$ is the Lagrangian submodule corresponding to stabilizer state $|\mathcal L_i\rangle$. The matrix $D$ defined above is the \textit{intersection matrix}.

If $g\geq n$, the states  $|\mathcal L_i\rangle$ are linearly independent (by proposition \ref{indep}), therefore $D$ is invertible. For $g<n$, $D$ may fail to be invertible. This simply means that the coefficients $c_i$ are not uniquely determined. The obvious way to fix this issue is to expand the state in a linearly independent subset of $|\mathcal L_i\rangle$, however this approach is not practical. Instead, we are going to treat $g$ as a positive real parameter, so that $D$ is invertible except on a discrete subset of values and $c_i$ can be calculated as functions of $g$. Then, $g$ can be sent to the desired value by taking a limit. With this in mind, we can formally treat $|\mathcal L_i\rangle$ as a basis for all $g$ and solve for $c$ in (\ref{eqscD}) 
\be c=D^{-1}v^{\text{seed}},~~v_j^{\text{seed}}\equiv\langle{\mathcal L_j}|\rho_{\text{seed}}\rangle\label{c-solns}.\ee
Therefore, we can write
\be \avg{ |\rho_{\text{seed}}\rangle}=\sum_{i,j}(D^{-1})_{ij}v_i^{\text{seed}}|\mathcal L_j\rangle \label{avgexpr}.\ee

\subsubsection*{The sum over handlebody topologies}
As a warmup, consider the seed state $|\rho_{\text{seed}}\rangle=|\Omega_{\vec 1}\rangle=|0^g\rangle$. The symplectic average of this state is the sum over regular handlebody topologies, in other words, it is the Poincar\'e series of the vacuum. Since each module $\mathcal L_j$ contains the zero element exactly once, it is clear that
\be v_j^{\vec 1}=\langle\mathcal L_j|0^g\rangle={1}\ee
and inserting into (\ref{avgexpr}) we find the average
\be  \avg{ |0^g\rangle}=\sum_{i,j}(D^{-1})_{ij}|\mathcal L_i\rangle\label{vacuumseries}.\ee
This expresses the sum over handlebodies in terms of the intersection matrix of the $\q$-Lagrangian submodules.

\subsubsection*{The sum over non-handlebody topologies}
Consider now $|\rho_\text{seed}\rangle=|\Omega_{\vec d}\rangle$, for a general $\eta$-Lagrangian stabilizer state (\ref{stabstate}). First, we calculate the seed vector $v^\text{seed}\equiv v^{\vec d}$
\be v_i^{\vec d}\equiv\langle \mathcal L_i|\Omega_{\vec d}\rangle\label{overlap} .\ee
For such overlaps there is a significant simplification. Since the state $|\Omega_{\vec d}\rangle$ is invariant under the entire $\Aut(\G)$ group, we can choose an automorphism of $\G$ that makes the bilinear form anti-diagonal and then bring the Lagrangians $\mathcal L_i$ to the diagonal form (\ref{eseven}). Then, $v_i^{\vec d}$ is constant on each $\O(\G)$-orbit.

To proceed it is convenient to change the indices in (\ref{avgexpr}) to the pairs $(\vec q, Q)$, where as in subsection \ref{sec:classification}, $\vec q\in\mathbb S$ denotes an $\O(\G)$-orbit of TBC states and $Q\in \O_{\vec q}\equiv\O(\G)/\Gamma_{\vec q}$ labels a TBC state within the orbit. Here $\Gamma_{\vec q}$ is the subgroup of $\O(\G)$ that stabilizes the diagonal Lagrangian in (\ref{eseven}). The intersection matrix with the new labeling reads
\be D_{\vec q,Q;\vec q',Q'}={|(\mathcal L_{\vec q,Q})\cap (\mathcal L_{\vec q',Q'})|^g\over |\G|^g},\ee
where $\mathcal L_{\vec q,Q}$ denotes the module in (\ref{eseven}).

We can express the overlap (\ref{overlap}) as
\be v_{\vec q,Q}^{\vec d}=\langle\mathcal L_{\vec q,Q}|\Omega_{\vec d}\rangle=\langle\mathcal L_{\vec q}|U_Q^\dagger|\Omega_{\vec d}\rangle=\langle\mathcal L_{\vec q}|\Omega_{\vec d}\rangle,\ee
where we used the fact that $|\Omega_{\vec d}\rangle$ is $\O(\G)$-invariant. Since $v$ does not depend on $Q$, we can drop this index.
Using the explicit expressions (\ref{qstate}) and (\ref{stabstate})
\be |\mathcal L_{\vec q}\rangle=\left(\sum |a_1q_1,\dots,a_nq_n,b_1N/q_n,\dots,b_n N/q_1\rangle\right)^{\otimes g} ,\ee
\be |\Omega_{\vec d}\rangle=\prod_{j=1}^g \bigotimes_{i=1}^n\left({1\over \sqrt{d_i}}\sum_{a\in\mathbb Z_{d_i}} |Na/d_i\rangle\right)^{\otimes 2} ,\ee
the overlap is given by
\be \langle \mathcal L_{\vec q}|\Omega_{\vec d}\rangle=\prod_{i=1}^{n}\prod_{j=1}^g \mathsf I(d_j,q_i),\ee
where
\be \mathsf I(d_j,q_i)\equiv {1\over d_j}\left|\langle N/d_j\rangle\cap\langle q_i \rangle\right|\cdot\left|\langle N/d_j\rangle\cap\langle N/q_i \rangle\right|={\gcd(d_j,q_i)\gcd(d_j,N/q_i)\over d_j} .\ee
We can simplify this expression to
\be \mathsf I(d_j,q_i)=\gcd(d_j,q_i,N/d_j,N/q_i)=\gcd(d_j,q_i),\ee
where in the last step we used that $q_i^2|N$ and $d_j^2|N$, according to their definitions. Putting everything together, we find the seed vector
\be v_{\vec q}^{\vec d}=\prod_{i=1}^{n}\prod_{j=1}^g \gcd(d_j,q_i) .\label{vmatr}\ee
Plugging into (\ref{avgexpr}) we have
\be \avg{ |\Omega_{\vec d}\rangle}= \sum_{\vec q,\vec q'\in\mathbb S}\sum_{Q\in \O_{\vec q}}\sum_{Q'\in \O_{\vec q'}}v_{\vec q}^{\vec d}D^{-1}_{\vec q,Q;\vec q',Q'}|\mathcal L_{\vec q',Q'}\rangle=\sum_{\vec q,\vec q'\in\mathbb S}\sum_{Q'\in \O_{\vec q'}}v_{\vec q}^{\vec d}Y_{\vec q;\vec q',Q'}{1\over |\O_{\vec q'}|}|\mathcal L_{\vec q',Q'}\rangle ,\label{bb}\ee
where we defined the matrix
\be Y_{\vec q;\vec q',Q'}\equiv |\O_{\vec q'}|\sum_{Q\in\O_{\vec q}} D^{-1}_{\vec q,Q;\vec q',Q'}.\ee
Since the LHS of (\ref{bb}) is invariant under $\O(\G)$, the RHS must also be invariant. This implies that $Y_{\vec q;\vec q',Q'}$ does not depend on the index $Q'$ so we can drop it. We defined $Y_{\vec q;\vec q'}$ this way so that it is symmetric. We define the  $\O(\G)$-averaged state
\be |\overline {\mathcal L_{\vec q}}\rangle\equiv{1\over |\O_{\vec q}|}\sum_{Q\in\O_{\vec q}}|\mathcal L_{\vec q,Q}\rangle, \label{Oavg}\ee
and we can finally write the expression
\be \avg{|\Omega_{\vec d}\rangle}=\sum_{\vec q,\vec q'\in\mathbb S}v_{\vec q}^{\vec d}Y_{\vec q;\vec q'}|\overline {\mathcal L_{\vec q'}}\rangle=\sum_{\vec q'\in\mathbb S}\t_{\vec d,\vec q'}|\overline {\mathcal L_{\vec q'}}\rangle.\label{bbf}\ee
Above we introduced the ``$\t$-matrix"
\be \t_{\vec d,\vec q}\equiv\sum_{\vec q'\in\mathbb S} v_{\vec q'}^{\vec d}Y_{\vec q';\vec q} \label{tau}.\ee
The TQFT gravity path integral can now be expressed as 
\be \mathcal Z_{gravity}[\mathcal T,\Sigma_g;w]=\sum_{\vec d\in\mathbb T}w_{\vec d}\avg{|\Omega_{\vec d}\rangle}=\sum_{\vec d\in\mathbb T,\vec q\in\mathbb S}w_{\vec d}\t_{\vec d,\vec q}|\overline {\mathcal L_{\vec q}}\rangle=\sum_{\vec q\in\mathbb S}m_{\vec q}|\overline {\mathcal L_{\vec q}}\rangle.\label{gravityZ}\ee
Above we defined the boundary weights
\be m_{\vec q}\equiv \sum_{\vec d\in\mathbb T} w_{\vec d}\t_{\vec d,\vec q}.\label{wtom}\ee
Thus the $\t$-matrix in (\ref{tau}) is an explicit linear map between the bulk and boundary weights, which can be completely determined from the knowledge of all TBCs that the TQFT admits. Classifying the full set of TBCs is not an easy problem and a systematic procedure based on the groupoid structure of \cite{Gaiotto_2021} will be discussed in section \ref{sec3}.

Some comments are in order. In the special case where $N$ is square-free, both $\mathbb S,\mathbb T$ have cardinality $1$. In other words, there is a single $\O(\G)$-orbit of TBC and a single $\Sp(2g,\mathbb Z_N)$-orbit of representatives topologies. Thus, there is a single state invariant under both groups and (\ref{gravityZ}) becomes
\be \mathcal Z_{gravity}[\mathcal T,\Sigma_g;w]=w_{\vec 1}\avg{ |\Omega_{\vec 1}\rangle}=w_{\vec 1}\t_{\vec 1,\vec 1}|\overline{\mathcal L_{\vec 1}}\rangle={w_{\vec 1}\t_{\vec 1;\vec 1}\over |\O_{\vec 1}|}\sum_{Q'\in\O_{\vec 1}}|\mathcal L_{\vec 1,Q'}\rangle ,\ee
which was proven recently in \cite{Dymarsky:2025agh}.

More generally, the TBC states $|\mathcal L_{\vec q,Q}\rangle$ span the $Sp(2g,\mathbb Z_N)$-invariant subspace and by averaging them over $\O(\G)$ we obtain the states $\{|\overline{\mathcal L_{\vec q}}\rangle\}$ which span the subspace $\mathcal K\subseteq\mathcal H$ invariant under $Sp(2g,\mathbb Z_N)\times \O(\G)$. Similarly, the states $|\Omega_{\vec d}\rangle$ span the $\O(\G)$-invariant subspace \cite{NRS} and their symplectic averages $\{\avg{|\Omega_{\vec d}\rangle},\vec d\in\mathbb T\}$ form another spanning set of $\mathcal K$. The dimension of $\mathcal K$ is given by $\dim\mathcal K=\min\{|\mathbb S|,|\mathbb T|\}$.

In equation (\ref{bbf}) a convex sum of $\eta$-Lagrangian stabilizer states is expressed as linear combination of $\q$-Lagrangian stabilizer states. Positivity of the Wigner function of projectors onto stabilizer states implies that the entries $\t_{\vec d,\vec q}$ are nonnegative, as long as $g\geq n$. For $g<n$, $\t_{\vec d,\vec q}$ are not uniquely defined, but can always be brought to a nonnegative form.
Therefore, if we require the weights $w_{\vec d}$ to be nonnegative, there can be no cancellations on the RHS of (\ref{gravityZ}) and we unavoidably obtain a (weighted) ensemble of all TBC states. The relation between the weights (\ref{wtom}) is generally not one-to-one, since $\t$ is not a square matrix, unless $|\mathbb S|=|\mathbb T|$.

For $g=n$, the matrix $\t_{\vec d,\vec q}$ is square and is in fact ``upper triangular" in the sense that for $g= n$, we have $\t_{\vec d,\vec q}=0$ unless $\vec d\leq\vec q$. To see this, note that constructing a TBC state $|\mathcal L_{\vec q,Q}\rangle$ involves inserting a network of anyon lines along all cycles of $\mathcal V_g^{\vec d}$. But when $\vec d\not\leq \vec q$, the space $\mathcal V_g^{\vec d}$ cannot support the network that constructs $|\mathcal L_{\vec q}\rangle$.

\subsection{A prescription for the weights}\label{DSpresc}

So far, we have not imposed any constraints on the weights $w_{\vec d}$. In a consistent definition of the quantum gravity path integral, however, the bulk weights should be completely fixed. A concrete proposal was given in \cite{Dymarsky:2024frx}, where the quantum gravity state was defined by applying a genus-reduction map to a uniform average over large-genus handlebodies. We now briefly review this prescription.

Let $\mathcal H_g$ denote the Hilbert space the TQFT assigns to the surface $\Sigma_g$.
 The dual vector $\langle 0|^{\otimes g'}\in\mathcal H_{g'}^\star$ can be viewed as a genus-reduction map $\langle 0|^{\otimes g'}:\mathcal H_{g+g'}\to\mathcal H_{g}$. Applying this map to (\ref{vacuumseries}) we obtain
\be  \begin{split}\langle 0|^{\otimes g'}\avg{|0\rangle^{\otimes (g+g')}}&=\sum_{i,j}(D_{g+g'}^{-1})_{i,j}(\langle 0|^{\otimes g'})|\mathcal L_i^{(g+g')}\rangle\\&=\sum_{i,j}(D_{g+g'}^{-1})_{i,j}|\mathcal L_i^{(g)}\rangle, \end{split}\label{vacuumseries2}\ee
where we introduced indices to explicitly denote the genera at which the states and the intersection matrix are evaluated.
Taking the limit $g'\to\infty$, the intersection matrix simplifies $D=\id$ (see (\ref{eqscD})) and we can formally write
\be \lim\limits_{g'\to\infty}\langle 0|^{\otimes g'}\avg{|0\rangle^{\otimes (g+g')}}=\sum_{i}|\mathcal L_i^{(g)}\rangle.\label{vacuumseriesinfty}\ee
The LHS of the above equality can be expressed as a linear combination of $\eta$-Lagrangian stabilizer states in which all representative topologies $|\Omega_{\vec d};\gamma\rangle$ inevitably arise. For a more rigorous discussion and an illustrative example we refer to \cite{Dymarsky:2024frx}. The boundary weights in the RHS of (\ref{vacuumseriesinfty}) are genus-independent, compatible with a holographic interpretation.

\section{Constructing Topological Boundary Conditions}\label{sec3}

In this section we describe a systematic procedure, based on \cite{Gaiotto_2021}, for constructing topological boundary conditions (TBCs) in a class of abelian TQFTs.
An abelian TQFT is specified by data $(\G,\q)$, where $\G$ is a finite abelian group of odd exponent $N$ and $\q$ is a quadratic form. Different quadratic forms related by automorphisms of $\G$ define equivalent TQFTs. The intersections of $\q$-Lagrangian modules are invariant under automorphisms of $\G$, therefore the $\t$-matrix of section \ref{sec:TQFTgrav} is the same for all equivalent theories. For concrete calculations it is convenient to fix a representative quadratic form, and for an explicit construction of the CFT ensemble (see section \ref{sec:CFT}) one also needs a concrete Chern–Simons realization \eqref{CSaction}. 
We will therefore adopt specific conventions.

In \ref{class1} we consider TQFTs equivalent to untwisted Dijkgraaf-Witten theories \cite{Dijkgraaf:1989pz} with gauge group $\sG$ of odd exponent. In these theories, we have a decomposition $\G=\sG\times\bar\sG$ and the quadratic form can be brought into the block-diagonal form $\q=q\oplus(-q)$ by an automorphism of $\G$, where $q$ is a quadratic form on $\sG$. The TQFT thus decomposes into sectors $\mathcal T \times \bar{\mathcal T}$, where $\mathcal T=(\sG,q)$ is described by a Chern–Simons theory (\ref{CSaction}) with positive-definite $K$-matrix. The pair $(\sG,q)$ defines a pointed MTC, which encodes the chiral data of a bosonic RCFT \cite{Fuchs:2002cm}. Elements of the anyon basis correspond to chiral conformal blocks \cite{Witten:1988hf}. We will therefore refer to $\mathcal T=(\sG,q)$ as the \textit{chiral} sector of the full theory $\mathcal T\times\bar{\mathcal T}$.
The Hilbert space of the doubled theory $\mathcal T \times \bar{\mathcal T}$ on $\Sigma_g$ is isomorphic to $\End(\mathcal H_{\mathcal T})$, where $\mathcal H_{\mathcal T}$ is the Hilbert space of $\mathcal T$ on $\Sigma_g$. The isomorphism is given by ``vectorization"
\be \End(\mathcal H_{\mathcal T})\to\mathcal H_{\mathcal T\times\bar{\mathcal T}}:~~|a\rangle\langle b|\mapsto |a,b\rangle,\quad a,b\in\sG^g.\label{vectorization}\ee
The inverse vectorization map transforms TBC states of $\mathcal T\times\bar{\mathcal T}$ into surface operators of $\mathcal T$. On the other hand, the representative topologies determined by $\eta$-Lagrangian submodules are not affected by this map. Each $\eta$-Lagrangian stabilizer state of $\mathcal H_{\mathcal T\times\bar{\mathcal T}}$ maps to a projector onto the corresponding stabilizer state of $\mathcal H_{\mathcal T}$.

More generally, in \ref{qiso}, we consider theories that can be brought to the form $\mathcal T \times \bar{\mathcal T}'$ such that $\mathcal T'$ is obtained from $\mathcal T$ by gauging an isotropic subgroup $\H\subset\sG$, i.e. $\mathcal T' = \mathcal T/\H$. The two chiral sectors now describe different RCFTs and this construction guarantees their compatibility (i.e. the existence of TBCs \cite{davydov2011wittgroupnondegeneratebraided}).
Now we have the following generalization of (\ref{vectorization})
\be \operatorname{Hom}(\mathcal H_{\mathcal T'},\mathcal H_{\mathcal T})\to \mathcal H_{\mathcal T\times\bar{\mathcal T}'}:~~|a\rangle\langle b|\mapsto |a,b\rangle,\quad a\in\sG^g,b\in{\sG'}^g.\label{vectorization2}\ee The topological boundary conditions of $\mathcal T\times\bar{\mathcal T'}$ are mapped into interfaces between the two theories.

\subsection{Construction and classification of surface operators}\label{class1}
Surface operators of $\mathcal T=(\sG,q)$ can be constructed by higher-gauging subgroups of $\sG$ \cite{Roumpedakis:2022aik}.
The $F$-symbols of the anyons in $\sG$ can be chosen to be trivial so that any subgroup $\H\subseteq\sG$ can be gauged on a codimension-1 surface. The possible gaugings are classified by pairs $(\H,\nu)$, where $\H\subseteq \sG$ and $\nu\in H^2(\H,\mathbb C^\times)$ is a choice of discrete torsion.  To gauge $(\H,\nu)$, we insert a network of anyon lines dual to a triangulation of the surface and sum over all charges in $\H$. 
A choice of discrete torsion is equivalent to the choice of an antisymmetric form $B:\H\times\what\H\to\mathbb Z_N$ and we can write $\nu(a,b)=\omega^{2^{-1}B(a,b)}$. Since a genus-$g$ surface operator is the $g$-fold tensor product of a genus-$1$ surface operator, we only need to consider the sum on a torus
\be \begin{split}\rho_{{\H,\nu}}&={1\over |\H|}\sum_{\alpha\in  H_1(\Sigma_1,\H)}\nu(\alpha)W(\alpha)\\
	&={1\over |\H|}\sum_{a,b\in\H}\sum_{x\in\sG}\omega^{2^{-1}B(a,b)}\omega^{2^{-1}g(a,b)}|x+a\rangle\langle x|\omega^{g(b,x)}\\
	&={1\over |\H|}\sum_{a,b\in\H}\sum_{x\in\sG}\omega^{(2^{-1}aB+2^{-1}ag+xg)b^T}|x+a\rangle\langle x|\\
	&=\sum_{a\in\H}\sum_{x\in\sG}\delta_{2^{-1}a(B+g)+xg\in\H^\perp}|x+a\rangle\langle x|\\
	&=\sum_{a\in\H,h\in \H^\perp}|a(-\id+Bg^{-1})+h\rangle\langle a(\id+Bg^{-1})+h|,\end{split}\label{rho}\ee
where in the third line and afterwards we use the notation $g(a,b)=ag b^T$ for a bilinear form. In the fourth line, we denote by $\H^\perp$ the dual module to $\H$ with respect to $g$. This expression immediately tells us that the surface operator can be obtained by summing over the submodule $\mathcal L\subset\G$ 
\be \rho_{{\H,\nu}}=\sum_{(a,b)\in \mathcal L}|a\rangle\langle b|,~~\mathcal L=\begin{pmatrix}
	\H^\perp & \H^\perp\\
	\H(Bg^{-1}-\id)& \H(Bg^{-1}+\id)
\end{pmatrix},\label{Cgen}\ee
where we wrote the generator of $\mathcal L$ in a block form and by abuse of notation we use the symbols $\mathcal L$, $\H$ and $\H^\perp$ to denote a generating matrix for these modules. The module $\mathcal L$ is automatically Lagrangian with respect to the quadratic form $\q=q\oplus(-q)$ \cite{KS2}.

Gauging $\H$ on a surface results in a ``quantum symmetry" $\what\H\times\Q$, where $\Q=\sG/\H$, of lines confined to the surface (see appendix \ref{highergauging}). A subtle point is that depending on $\H$, the following exact sequence may not split
\be  \begin{tikzcd}
	0 \arrow[r] & \H \arrow[r] & \sG  \arrow[r] & \Q \arrow[r] & 0 
\end{tikzcd}.\ee
Such extensions are classified by 2-cocycles in the group cohomology $\mu\in H^2(\Q,\H)$. The trivial $\mu$ means that $\sG\cong \H\times\Q$. A non-trivial $\mu$ results in a higher symmetry with a mixed anomaly \cite{Tachikawa_2020}. This anomaly prevents gauging of arbitrary subgroups of $\what H\times\Q$, but subgroups of $\what H$ or $\Q$ can still be gauged, with some choice of discrete torsion, to obtain another surface operator. This motivates the definition of a groupoid \cite{Gaiotto_2021}, whose objects are surface operators and morphisms are gauging operations. In particular, the groupoid structure implies that all surface operators can be obtained by gauging the trivial surface, in other words (\ref{Cgen}) gives the \textit{full} set of $\q$-Lagrangian submodules. 

When the exponent of $\sG$ is squarefree, the groupoid structure simplifies. For any subgroup $\H$ we necessarily have $\H\times\Q\cong \sG$ and any subgroup can be again gauged. This is reflected in the fact that $\O(\G)$ acts transitively on the Lagrangian subgroups, hence all gauging operations can be realized by the global action of $\O(\G)$. In particular, all surface operators can be obtained by acting on the trivial surface by $\O(\G)$, hence the objects of the groupoid are in bijection with $\O(\G)/\Gamma_{\vec 1}$, where $\Gamma_{\vec 1}$ is the subgroup that stabilizes the trivial surface. 

When the exponent of $\sG$ is not squarefree, not all Lagrangians of the form (\ref{Cgen}) belong to the same $\O(\G)$-orbit, since they are non-isomorphic as groups. Instead, surface operators are split into multiple $\O(\G)$-orbits.

To describe this classification, we focus on $\G=\mathbb Z_N^{2n}$ and specialize the discussion from \ref{sec:classification} to surface operators by bringing the bilinear form $\g$ from (\ref{antidiag}) into the diagonal form
\be \g=\begin{pmatrix}
	\id_{n\times n}& 0\\
	0 & -\id_{n\times n}
\end{pmatrix}.\ee
Theorem \ref{theorem:Orthogonal_codes} now gives a classification of surface operators. Define the tuples $\vec q$ as in theorem \ref{theorem:Orthogonal_codes}.
Then, the generator of any Lagrangian submodule $\mathcal L\subset\sG\times\bar\sG=\G$ with respect to $\q=q\oplus(-q)$ can be written as
\be \begin{pmatrix}
	\operatorname{diag}(\vec q)& \operatorname{diag}(\vec q)\\
	\operatorname{diag}(N/\vec q) & \operatorname{diag}(\vec 0)
\end{pmatrix}\times Q,\label{gen2}\ee
for some  $Q\in \O(\G)$. Above we defined $(N/\vec q)=(N/q_1,\dots,N/q_n)$.

The generator matrix (\ref{gen2}) for $Q=\id$ is in block form and we can write the surface operator explicitly in the computational basis
\be \rho_{\vec q}= \sum_{(a,b)\in \mathbb Z_{N/\vec q}\times\mathbb Z_{\vec q}}\bigotimes_{i=1}^{n}|a_iq_i+b_i {N/ q_i}\rangle\langle a_iq_i|.\label{rhoi}\ee
For each $\vec q\in\mathbb S$ we define the subgroup $\Gamma_{\vec q}\subseteq \O(\G)$ that stabilizes the submodule (\ref{gen2}) with $Q=\id$. Then, the cosets $\O_{\vec q}=\O(\G)/\Gamma_{\vec q}$ are in bijection with the surface operators in the orbit labeled by $\vec q$.  An arbitrary surface operator is uniquely parametrized by $\vec q\in\mathbb S$ and $Q\in \O_{\vec q}$ as $\rho_{\vec d,Q}$.
Note that the surface operators with $\vec d\neq\vec 1$ are always non-invertible. Operators $\rho_{\vec 1,Q}$ may be invertible, depending on $Q$. In particular $\rho_{\vec 1,\id}$ is the trivial (identity) operator.


Surface operators form a non-commutative monoid graded by $\mathbb S$. Fusing two operators $\rho_{\vec q,Q}$, $\rho_{\vec q',Q'}$ results into an operator with grade $\vec q''=\vec q\lor\vec q'$. The set of invertible operators form a group $\mathbb G$, which is a faithful unitary representation of $\O(\sG)$. Such operators are obtained from Lagrangian submodules with generators of the form $(\id|\mathsf B)$, where $\mathsf B \in\O(\sG)$ and they all belong to the orbit $\O_{\vec 1}$. The identity element of $\mathbb G$ is the trivial surface operator $\rho_{\vec 1,\id}$, described by the module $(\id|\id)$, while the surface $\rho_{\vec 1,-\id}$ is the charge-conjugation operation, described by $(\id|-\id)$. From the CFT point of view, the invertible operators describe $T$-dualities, preserving the (unflavored) Siegel-Narain partition function.

\subsection{Construction of interfaces}\label{qiso}

In this subsection, we describe the construction of interfaces between $\mathcal T$ and $\mathcal T'=\mathcal T/\H$, for some isotropic $\H\subseteq\sG$. The diagonal interface $\mathcal I_\H$ can be constructed by starting with $\mathcal T$ on $\Sigma_g\times I$ and gauging the non-anomalous subgroup $\H$ in half of the space \cite{KNBalasubramanian:2025vum}. Then, the rest of the interfaces can be obtained by fusing $\mathcal I_\H$ with surface operators of $\mathcal T$.

According to the discussion in \ref{gaug}, a $q$-isotropic submodule $\H\subset\sG$ is non-anomalous and can be gauged in a handlebody. The resulting theory $\mathcal T/\H$ has spectrum of lines in $\K=\H^\perp/\H$ and torus Hilbert space with basis $\{|a\rangle_{\mathcal T/\H},a\in\K\}$. 
The interface $\mathcal I_{\H}$ between the original and the gauged theory can be constructed by placing $\mathcal T$ on $\Sigma_1\times [0,1]$ and gauging $\H$ in half of the space $\Sigma_1\times[0,1/2]$. This operator is an isometric embedding $\mathcal I_{\H}:\mathcal H_{\mathcal T/\H}\to\mathcal H_{\mathcal T}$, which we can write explicitly in the anyon bases of the two theories
\be \mathcal I_{\H}={1\over \sqrt{|\H|}}\sum_{a\in\K}\sum_{h\in\H}|h+\pi^{-1}(a)\rangle\langle a|_{\mathcal T/\H} \label{enc},\ee
where $\pi:\H^\perp\to\K$ is the canonical projection. The conjugate map $\mathcal I_{\H}^\dagger:\mathcal H_{\mathcal T}\to \mathcal H_{\mathcal T/\H}$ is a projection onto the Hilbert space of $\mathcal T/\H$. The genus-$g$ version of this operator is simply the $g$-fold tensor product of the torus operator.

The image of $\mathcal I_{\H}$ at genus-$g$ is the code subspace of the stabilizer corresponding to the module $\H\otimes\mathbb Z_N^{2g}$ and the Hilbert space of $\mathcal T/\H$ can be viewed as the space of logical states. Therefore $\mathcal I_\H$ and $\mathcal I_{\H}^\dagger$ are encoding/decoding maps. We emphasize here that the encoding (\ref{enc}) embeds a system of $g$ ``logical" qudits into another system of $g$ ``physical" qudits with larger local dimensions of their Hilbert space, rather than a system of more qudits as is typical in quantum error correction. 


\section{Examples}\label{sec:examples}

In this section we consider various examples in which we explicitly construct all TBC states by applying the method of section \ref{sec3} and then calculate the $\lambda$-matrix and TQFT gravity state (\ref{gravityZdef}). We emphasize that even though we choose a specific quadratic form to calculate the $\q$-Lagrangian submodules, the classification of TBC states and the $\t$-matrix are the same for all equivalent TQFTs.

\subsection{Example I: $\sG=\mathbb Z_{p^m}$}\label{sec:ex1}

We consider theories of the form $\mathcal T \times \bar{\mathcal T}'$, where $\mathcal T=(\sG,q)$ for a cyclic group $\sG$ and $\mathcal T'=\mathcal T/\H$ for some isotropic $\H\subseteq\sG$. We set $\sG=\mathbb Z_{p^m}$ for some odd prime $p$.

A concrete realization of the $\mathcal T\times \bar{\mathcal T}$ theory is given by the $U(1)_{p^m}\times U(1)_{-p^m}$ CS theory studied in \cite{Raeymaekers:2021ypf,Raeymaekers_2024} or its level-rank dual $SU(p^m)_1\times SU(p^m)_{-1}$ \cite{AngelinosWZW}. Moreover, by an automorphism of $\G$, this theory becomes the untwisted Dijkgraaf-Witten theory studied recently in \cite{Nicosanti:2025xwu}, which is the same as the level-$p^m$ toric code.

\subsubsection{Surface operators}

The group $\sG=\mathbb Z_{p^m}$ has $m+1$ subgroups $\H_k=p^{m-k}\mathbb Z_{p^m}\cong \mathbb Z_{p^k}$ that can be gauged on a surface. Since $H^2(\mathbb Z_N,\mathbb C^\times)$ is trivial, there is a unique surface operator for each subgroup. Equivalently, each surface  operator corresponds to a Lagrangian  $\mathcal L_k\subseteq\sG\times \bar\sG=\G$ with respect to the quadratic form $q\oplus (-q)=\q$. By applying (\ref{Cgen}) we obtain
\be \mathcal L_k=\begin{cases}
	\begin{pmatrix}
		p^{k} & p^k\\
		0 & p^{m-k}
	\end{pmatrix} & 0\leq k\leq \lfloor {m\over 2}\rfloor\\
	\begin{pmatrix}
		p^{m-k} & -p^{m-k}\\
		0 & p^{k}
	\end{pmatrix} & \lfloor {m\over 2}\rfloor<k\leq m.
\end{cases}\label{lagr}\ee

The genus-$g$ surface operators of $\mathcal T$ map to genus-$g$ TBC states $|\mathcal L_i\rangle$ of $\mathcal T\times\bar{\mathcal T}$.

\subsubsection{Gravity State of $\mathcal T\times\bar{\mathcal T}$}

We now calculate the $\t$-matrix and the gravity state for $\mathcal T\times\bar{\mathcal T}$ using the formula (\ref{gravityZ}).
The number of symplectic orbits $\mathbb T$ at genus $g$ is given by proposition \ref{orbitscount}
\be |\mathbb T|={(\lfloor {m\over 2}\rfloor +g)!\over (\lfloor {m\over 2}\rfloor )!g!},\ee
while for the orbits of TBC
\be |\mathbb S|=\left\lfloor{m\over 2}\right\rfloor+1 .\ee
These two are equal only at genus $1$, hence the $\t$-matrix is square only in that case.

The intersection matrix $D$ is Toeplitz
\be D_{ij}={1\over p^{mg}}|\mathcal L_i\cap \mathcal L_j|^g={1\over p^{g|i-j|}}\ee
and its inverse is tridiagonal
\be (D^{-1})_{ij}= {p^g\over p^{2g}-1}\left((p^g+p^{-g}) I-\tilde I p^{-g}-\epsilon\right),\ee
where $I$ is the identity matrix, $\tilde I_{ij}=\delta_{ij}(\delta_{i0}+\delta_{im})$ and $\epsilon_{ij}=\delta_{|i-j|,1}$. Note that $\mathcal L_i$ and $\mathcal L_{m-i}$ belong to the same $\O(\G)$-orbit. The matrix $Y$ is therefore
\be Y={1\over p^g-p^{-g}}(2(p^g+p^{-g})I-2\epsilon-\hat I) ,\label{Tt}\ee
where $\hat I_{ij}=-2p^{-g}\delta_{i0}\delta_{j0}$ if $m$ is even and $\hat I_{ij}=-2p^{-g}\delta_{i0}\delta_{j0}-(p^g+p^{-g})\delta_{ij}\delta_{i,{m-1\over 2}}$ if $m$ is odd.

The next step is to calculate the ``seed" matrix (\ref{vmatr})
\be v_{p^i}^{\vec d}=\prod_{j=1}^g \gcd(p^i,d_j) \ee
and finally the $\t$-matrix
\be \t_{\vec d,p^i}=\sum_{j=0}^{\lfloor m/2\rfloor+1}v_{p^j}^{\vec d}Y_{j,i}.\label{tT}\ee

\paragraph{Genus 1:} At genus $g=1$, the $\t$-matrix is square. Using that $\vec d\in\{1,p,\dots,p^{\lfloor {m\over 2}\rfloor}\}$ we calculate the seed vector
\be v_{p^i}^{p^j}=p^{\min\{i,j\}} .\ee
Using (\ref{Tt}) with (\ref{tT}), we find the matrix $\t_{p^i,p^j}$, which is upper triangular. For odd $m$
\be \t_{p^i,p^j}={2p^{i+1}\over p+1}\begin{cases}
	1 & i=j\\
	{p-1\over p}& i<j\\
	0 & i>j,
\end{cases}\label{ttodd}\ee
while for even $m$
\be \t_{p^i,p^j}={2p^{i+1}\over p+1}\begin{cases}
	1 & i=j<{m\over 2}\\
	{p-1\over p}& i<j<{m\over 2}\\
	{p-1\over 2p} & i<{m\over 2},j={m\over 2}\\
	{p+1\over 2p} & i=j={m\over 2}\\
	0 & i>j.
\end{cases}\label{tteven}\ee
Explicitly, the gravity state, for arbitrary bulk weights $w_{p^i}$, is given by
\be \mathcal Z[\mathcal T\times\bar{\mathcal T},\Sigma_1;w]=\begin{cases}\sum_{i=0}^{\lfloor m/2\rfloor} w_{p^i}{p^{i+1}\over p+1}\left(|\mathcal L_i\rangle+|\mathcal L_{m-i}\rangle+{p-1\over p}\sum_{j=i+1}^{m-i-1}|\mathcal L_j\rangle\right) & m\text{ odd}\\
\sum_{i=0}^{m/2-1} w_{p^i}{p^{i+1}\over p+1}\left(|\mathcal L_i\rangle+|\mathcal L_{m-i}\rangle+{p-1\over p}\sum_{j=i+1}^{m-i-1}|\mathcal L_j\rangle\right)+w_{p^{m/2}}p^{m/2}|\mathcal L_{m/2}\rangle & m\text{ even}	.\end{cases}\label{ZTQFT1}\ee

The genus-reduction prescription \ref{DSpresc} amounts to fixing the bulk weights such that the coefficient of each TBC state that appears in (\ref{ZTQFT1}) is equal to $1$. This results in the following bulk weights
\be w_{p^i}=\begin{cases}{p+1\over p^{2i+1}} & 0\leq i<m/2\\ p^{-m} & i=m/2.\end{cases}\ee

\paragraph{Genus $>1$:}
For $g>1$ the $\t$-matrix is no longer square and different choices of bulk weights can lead to the same result. This happens because the $\eta$-Lagrangian stabilizer states become linearly dependent. We will calculate the gravity state by setting some bulk weights to zero. Specifically, define the ``diagonal" weights $\tilde w_{p^i}\equiv w_{(p^i,\dots,p^i)}$ and set all other weights to zero. With this choice, the calculation is identical to the genus-$1$ case after the substitution $p\mapsto p^g$. We proceed by restricting the $\t$-matrix and seed vector $v$ to the indices on which the weights are now supported. The restricted seed vector is
\be \bar v_{p^i}^{p^j}=\prod_{l=1}^g p^{\min\{i,j\}}=p^{g\min\{i,j\}}.\ee
The restricted $\bar \t$-matrix then reads (for simplicity we present it only for odd $m$)
\be \bar \t_{p^i,p^j}={p^{g(i+1)}\over p^g+1}\begin{cases}
	1 & i=j\\
	{p^g-1\over p^g}& i<j\\
	0 & i>j.
\end{cases}\ee
The gravity state (for odd $m$) is given by
\be \mathcal Z[\mathcal T\times\bar{\mathcal T},\Sigma_g;\tilde w]=\sum_{i=0}^{\lfloor m/2\rfloor} \tilde w_{p^i}{p^{gi+g}\over p^g+1}\left(|\mathcal L_i\rangle+|\mathcal L_{m-i}\rangle+{p^g-1\over p^g}\sum_{j=i+1}^{m-i-1}|\mathcal L_j\rangle\right).\label{TQFTZ3}\ee
The even $m$ expression can be obtained by substituting $p\mapsto p^g$ in (\ref{ZTQFT1}).

The prescription \ref{DSpresc} fixes the remaining bulk weights as follows
\be \tilde w_{p^i}=\begin{cases}{p^g+1\over p^{g(2i+1)}} & 0\leq i<m/2\\ p^{-gm} & i=m/2.\end{cases}\ee

\subsubsection{Interfaces}

The quadratic form $q$ of $\sG$ vanishes on the subgroups $\H_k=p^{m-k}\mathbb Z_{p^m}\cong \mathbb Z_{p^{k}}$ for $k\leq m/2$, hence these are the only subgroups that can be gauged. They are $\lfloor {m\over 2}\rfloor+1$ in total, labeled by $k=0,1,\dots,\lfloor {m\over 2}\rfloor$.

To gauge $\H_k$ \cite{Kaidi:2021gbs}, we remove the lines that do not braid trivially with all lines in $\H_k$ and we identify lines in $\H_k$. The resulting chiral theory has residual $1$-form symmetry given by $\K_k=\H_k^\perp/\H_k\cong \mathbb Z_{p^{m-2k}}$.
Summing over all insertions of the lines in $H_k$ on the handlebody $\mathcal V_g$ we obtain a state that is a $g$-fold product of a torus state
\be |\Lambda_k\rangle^{\otimes g}={1\over \sqrt{H_1(\mathcal V_g,\H_k)}}\sum_{a\in H_1(\mathcal V_g,\H_k)}W(a)|0\rangle^{\otimes g},\ee
where after a short calculation
\be |\Lambda_k\rangle={1\over p^{k/2}}\sum_{a\in\mathbb Z_{p^k}}|ap^{m-k}\rangle.\label{omegak}\ee
Inserting lines with charges in $\H_k^\perp/\H_k\cong\mathbb Z_{p^{m-2k}}$ we can build a subspace isomorphic to the Hilbert space of $\mathcal T/\H_k$. This is the code subspace associated with $\H_k\otimes \mathbb Z_N^2$, with basis
\be X_{ap^k}|\Lambda_k\rangle={1\over p^{k/2}}\sum_{b\in\mathbb Z_{p^k}} |ap^k+bp^{m-k}\rangle \label{bulkHS}.\ee
The interface $\mathcal I_k:\mathcal H_{\mathcal T/\H_k}\to\mathcal H_{\mathcal T}$ is therefore given by
\be \mathcal I_k={1\over p^{k/2}}\sum_{b\in\mathbb Z_{p^k}} |ap^k+bp^{m-k}\rangle\langle a|_{\mathcal T/\H_k} .\label{interf}\ee
With this process we built the diagonal interface between $\mathcal T=(\mathbb Z_{p^{m}},q)$ and $\mathcal T'=(\mathbb Z_{p^{m-2k}},q)$. The full theory $\mathcal T\times\bar{\mathcal T}'$ has $1$-form symmetry group $\G=\mathbb Z_{p^m}\times\bar{\mathbb Z}_{p^{m-2k}}$ and quadratic form $\q(a,b)=a^2-p^{2k}b^2$.

The rest of the interfaces can be obtained by fusing (\ref{interf}) with surface operators of $\mathcal T$ (\ref{lagr}). 
For fixed $k$, the distinct interfaces at genus-$1$ are as follows
\be \tilde\rho_i\equiv p^{-k/2}\rho_i\mathcal I_k=\begin{cases}
	\sum |p^{m-i}a+p^ib\rangle\langle p^{i-k}b| & m/2\geq i\geq k\\
	\sum |p^{i}a-p^{m-i}b\rangle\langle p^{m-i-k}b| & m-k\geq i> m/2 \\
	0 & \text{otherwise}
\end{cases}\ee
and we denote the TBC state corresponding to $\tilde \rho_i^{\otimes g}$ by $|\tilde{\mathcal L}_i\rangle$. Explicitly, the Lagrangian submodules $\tilde{\mathcal L}_i\subset\mathbb Z_{p^m}\times\bar{\mathbb Z}_{p^{m-2k}}$ are generated by
\be \tilde{\mathcal L}_i=\begin{cases}
	\begin{pmatrix}
		p^{i} & p^{i-k}\\
		p^{m-i} & 0
	\end{pmatrix} & k\leq i\leq \lfloor {m\over 2}\rfloor\\
	\begin{pmatrix}
		p^{m-i} & -p^{m-i-k}\\
		p^{i} & 0
	\end{pmatrix} & \lfloor {m\over 2}\rfloor<i\leq m-k.
\end{cases}\label{lagr2}\ee


\subsubsection{Gravity states of $\mathcal T\times{\bar {\mathcal T}}'$}
From (\ref{lagr2}), the intersection matrix is again Toeplitz
\be D_{i,j}={|\tilde{\mathcal L}_i\cap \tilde{\mathcal L}_j|^g\over p^{g(m-k)}}=p^{-g|i-j|},~~k\leq i,j\leq m-k.\ee
The Lagrangian submodules are partitioned into $\lfloor{m-2k\over 2}\rfloor+1$ orthogonal orbits. The elements of the matrix $Y$ are again given by (\ref{Tt}), but resized into an $(\lfloor {m-2k\over 2}\rfloor+1)\times (\lfloor {m-2k\over 2}\rfloor+1)$ matrix. 

\paragraph{Genus 1:}
At genus $1$, using (\ref{stabstate}), the seed matrix is given by
\be v_{p^j}^{p^i}\equiv \langle \Omega_{p^i}|\mathcal L_j\rangle=p^{\min\{i,j\}}p^{-\min\{i/2,k\}},~~ 0\leq i\leq \lfloor m/2\rfloor,~k\leq j\leq \lfloor m/2\rfloor.\ee
The extra branched factor occurs because the right sector $\mathcal T/\H_k$ of the theory is unable to distinguish between topologies with $\vec d\in\{1,p,\dots,p^{\min\{2k,\lfloor m/2\rfloor-k\}}\}$, seeing them all as handlebodies, while the left sector $\mathcal T$, having a larger group exponent, is able to detect their structure. This factor can be absorbed by changing the normalization of the states (\ref{stabstate}).

The $\eta$-Lagrangian orbit averages $\avg{|\Omega_{\vec d}\rangle}$ are now linearly dependent and thus the $\t$-matrix is not uniquely defined. These linear relations can be obtained from the left nullspace of $v$ and using these relations, $\t$ can be brought to a form where its first $k$ rows are zero. Rescaling the states $|\Omega_{p^i}\rangle$, by choosing a more convenient normalization, the remaining square submatrix of $\t$ has entries given by (\ref{ttodd}-\ref{tteven}), but with both of its indices running from $k$ to $\lfloor m/2\rfloor$.

The gravity state at genus $1$ now reads (we only write the expression for odd $m$)
\be \mathcal Z[\mathcal T\times\bar{\mathcal {T}}',\Sigma_1;w]=\sum_{i=k}^{\lfloor m/2\rfloor} w_{p^{i}}{p^{i+1}\over p+1}\left(|\tilde{\mathcal L}_i\rangle+|\tilde{\mathcal L}_{m-i}\rangle+{p-1\over p}\sum_{j=i+1}^{m-i-1}|\tilde{\mathcal L}_j\rangle\right).\label{tqintr}\ee
In fact, there is a much easier way to obtain the gravity state for this theory. Starting from (\ref{ZTQFT1}) we can apply the inverse ``vectorization" map (\ref{vectorization2}) to turn it into an operator in $\operatorname{Hom}(\mathcal H_{{\mathcal T}'},\mathcal H_{\mathcal T})$. Fusing this operator with the interface $\mathcal I_{k}$ followed by (\ref{vectorization2}), we immediately obtain (\ref{tqintr}), up to an overall constant factor.

\paragraph{Genus $>$ 1:}
More generally, we can apply the same process as in (\ref{TQFTZ3}) to obtain a genus-$g$ state restricting to diagonal weights $\tilde w_{p^i}=w_{(p^i,\dots,p^i)}$. This amounts to replacing $p\mapsto p^g$ in (\ref{tqintr}) and we obtain
\be \mathcal Z[\mathcal T\times\bar{\mathcal T}',\Sigma_g;\tilde w]=\sum_{i=k}^{\lfloor m/2\rfloor} \tilde w_{p^{i-k}}{p^{g(i+1)}\over p^g+1}\left(|\tilde{\mathcal L}_i\rangle+|\tilde{\mathcal L}_{m-i}\rangle+{p^g-1\over p^g}\sum_{j=i+1}^{m-i-1}|\tilde{\mathcal L}_j\rangle\right).\label{TQFTZ4}\ee

\subsection{Example II: $\sG=\mathbb Z_{p^2}\times\mathbb Z_{p^2}$.}\label{sec:ex2}

In this example, we consider the TQFT $\mathcal T\times\bar{\mathcal T}$, where $\mathcal T=(\sG,q)$ with $\sG=\mathbb Z_{p^2}^2$ and $q(a,b)=a^2+b^2$. We also consider all theories of the form $\mathcal T\times\bar{\mathcal T}'$, where $\mathcal T'=\mathcal T/\H$ for $\H$ an isotropic subgroup of $\sG$.

\subsubsection{Surface operators}
There are $3p^2+4p+2$ Lagrangian submodules of $\G=\sG\times\bar\sG$, where $\sG=\mathbb Z_{p^2}^2$. They can be obtained (using (\ref{Cgen})) by gauging $(\H,\nu)$ on a surface, where $\H\subseteq\sG$ and $\nu\in H^2(\H,\mathbb C^\times)$ is a choice of discrete torsion. The complete classification is as follows:
\begin{enumerate}
	\item $\H=\{0\}$: This leads to the Lagrangian submodule describing the trivial surface 
	\be \begin{pmatrix}
		1 & 0 & 1 & 0\\
		0 & 1 & 0 & 1
	\end{pmatrix}.\ee
	\item $\H\cong\mathbb Z_p$: There are $p+1$ subgroups in this isomorphism class, with generators $(p,0)$, $(pa,p)$ for $a\in\{0,1\dots,p-1\}$. The Lagrangian submodules are
	\be \begin{pmatrix}
		p & 0 & p & 0\\
		0 & 1 & 0 & 1\\
		p& 0 &-p & 0
	\end{pmatrix},~~\begin{pmatrix}
		1 & -a & 1 & -a\\
		0 & p & 0 & p\\
		pa & p & -pa & -p
	\end{pmatrix},\quad a\in\{0,1\dots,p-1\}.\ee
	\item $\H\cong\mathbb Z_{p^2}$: There are $p^2+p$ subgroups with generators $(a,1)$ for $a\in\mathbb Z_{p^2}$ and $(1,bp)$ with $b\in\mathbb Z_{p}$, with Lagrangians
	\be \begin{pmatrix}
		1 & -a & 1 & -a\\
		a & 1 & -a & -1
	\end{pmatrix}~~\begin{pmatrix}
		bp & -1 & bp & -1\\
		1 & bp & -1 & -bp
	\end{pmatrix},~~a\in\mathbb Z_{p^2},~b\in\mathbb Z_p.\ee
	\item $\H\cong\mathbb Z_{p}\times \mathbb Z_p$: There is $1$ subgroup, generated by $\langle (p,0),(0,p)\rangle$ and $|H^2(\mathbb Z_{p}\times \mathbb Z_p,\mathbb C^\times)|=p$, resulting in $p$ Lagrangians
	\be \begin{pmatrix}
		p & 0 & p & 0\\
		0 & p & 0 & p\\
		-p & -k& p&-k\\
		k&-p & k & p
	\end{pmatrix},~~k\in\{0,1\dots,p-1\}.\ee
	\item $\H\cong\mathbb Z_{p^2}\times \mathbb Z_p$: There are $p+1$ subgroups generated by $\langle (0,1),(p,0)\rangle$ and $\langle (1,a),(0,p)\rangle$ for $a\in\{0,\dots,p-1\}$. Moreover, $|H^2(\mathbb Z_{p^2}\times \mathbb Z_p,\mathbb C^\times)|={p}$ for a total of $p^2+p$ Lagrangians
	\be \begin{pmatrix}
		p & 0 & p & 0\\
		k & 1 & k & -1\\
		p & -pk &-p &-pk
	\end{pmatrix},~~\begin{pmatrix}
		ap & -p & ap & -p\\
		ka-1&-k-a&ka+1&-k+a\\
		pk &-p & pk & p
	\end{pmatrix},~~a,k\in\{0,1\dots,p-1\} .\ee
	\item $\H\cong\mathbb Z_{p^2}\times \mathbb Z_{p^2}$: There is a unique subgroup and $|H^2(\mathbb Z_{p^2}\times \mathbb Z_{p^2},\mathbb C^\times)|=p^2$ 
	\be \begin{pmatrix}
		-1 & -k & 1 & -k\\
		k &-1 & k & 1
	\end{pmatrix},~~k\in\mathbb Z_{p^2}.\ee
\end{enumerate}

The Lagrangian submodules are partitioned into the $\O(\G)$-orbits $\O_{\vec q}$, with $\vec q\in \{(1,1),(1,p),(p,p)\}$. By direct counting, the orbit $\O_{(1,1)}$ contains $2p(p+1)$ submodules. Since $\O_{(1,1)}$ contains all the invertible surface operators, it is useful to compare it with $\O(\sG)$.
To count the elements of $\O(\sG)$, consider the modulo $p$ projection $\pi:\O(\mathbb Z_{p^2}^2)\to \O(\mathbb F_p^2)$. The kernel of $\pi$ consists of $2\times 2$ matrices of the form $\id+pX$, where $X^T=-X$, therefore $|\ker\pi|=p$. By the first isomorphism theorem, 
\be |\O(\mathbb Z_{p^2}^2)|=|\ker\pi||\O(\mathbb F_p^2)|=\begin{cases}
	2p(p+1) & p=3\mod 4\\
	2p(p-1) & p=1\mod 4.
\end{cases}\ee 
Therefore, if $p=3\mod 4$ the orbit $\O_{(1,1)}$ consists precisely of the invertible surface operators implementing $\O(\sG)$. Otherwise, $\O_{(1,1)}$ contains $4p$ additional surface operators that are non-invertible.

The rest of the orbits contain $|\O_{(1,p)}|=(p+1)^2$ and $|\O_{(p,p)}|=1$ non-invertible operators.

\subsubsection{Gravity state of $\mathcal T\times\bar{\mathcal T}$}

This time we have $|\mathbb S|=3$ orbits with $\mathbb S=\{(1,1),(1,p),(p,p)\}$ and $|\mathbb T|=g+1$ at genus-$g$.
The first task is to obtain matrix $Y_{\vec q,\vec q'}$. To achieve that, we calculate the following matrix
\be \mathcal D_{\vec q,\vec q'}\equiv\sum_{Q\in \O_{\vec q},Q'\in\O_{\vec q'}} D_{\vec q,Q;\vec q',Q'}.\ee
Using the ordering $(1,1)$, $(1,p)$, $(p,p)$ for the indices we can write it as
\be \mathcal D=\begin{pmatrix}
	2p(p+1)(1+p^{2(1-2g)}+p^{-2g}(p^2+2p-1)) & 2(p+1)^2 p^{1-3g}(p+p^{2g})& 2(p+1)p^{1-2g}\\
	2(p+1)^2 p^{1-3g}(p+p^{2g})& (p+1)^2(1+(p+2)p^{1-2g})& (p+1)^2p^{-g}\\
	2(p+1)p^{1-2g} &(p+1)^2p^{-g}& 1
\end{pmatrix},\ee
from which we obtain
\be Y= \left(
\begin{array}{ccc}
	\frac{2 (p+1) p^{4 g+1}}{\left(p^{2 g}-1\right) \left(p^{2 g}-p^2\right)} & -\frac{2 (p+1)^2 p^{3 g+1}}{\left(p^{2 g}-1\right)
		\left(p^{2 g}-p^2\right)} & \frac{2 (p+1) p^{2 g+2}}{\left(p^{2 g}-1\right) \left(p^{2 g}-p^2\right)} \\
	-\frac{2 (p+1)^2 p^{3 g+1}}{\left(p^{2 g}-1\right) \left(p^{2 g}-p^2\right)} & \frac{(p+1)^2 p^{2 g} \left(p^{2 g}+(p+2)
		p\right)}{\left(p^{2 g}-1\right) \left(p^{2 g}-p^2\right)} & -\frac{(p+1)^2 p^g \left(p^{2 g}+p^2\right)}{\left(p^{2 g}-1\right)
		\left(p^{2 g}-p^2\right)} \\
	\frac{2 (p+1) p^{2 g+2}}{\left(p^{2 g}-1\right) \left(p^{2 g}-p^2\right)} & -\frac{(p+1)^2 p^g \left(p^{2
			g}+p^2\right)}{\left(p^{2 g}-1\right) \left(p^{2 g}-p^2\right)} & \frac{p^{4 g}+2 p^{2 g+1}+p^4}{\left(p^{2 g}-1\right)
		\left(p^{2 g}-p^2\right)} \\
\end{array}
\right).\ee

\paragraph{Genus 1:}
At genus $1$, the TBC states are linearly dependent and $Y$ becomes singular. However, we can calculate the $\t$-matrix with $g$ as a parameter and then send $g\to1$ to obtain
\be \t=\left(
\begin{array}{ccc}
	p^2 & 0 & 0 \\
	-p^3 & p^2 (p+1) & 0 \\
\end{array}
\right).\ee
Note that $\t$ contains negative entries. This happens because the TBC states are linearly dependent at genus $1$, which results in the linear relation between the three $\O(\G)$-averaged states
\be p|\overline{\mathcal L_{(1,1)}}\rangle-(p+1)|\overline{\mathcal L_{(1,p)}}\rangle+|\overline{\mathcal L_{(p,p)}}\rangle=0,\quad g=1,\ee
 and thus the $\t$-matrix is not uniquely fixed. Using this linear relation, $\t$ can be brought to an upper-triangular form with nonnegative entries
\be \t=\left(
\begin{array}{ccc}
	p^2 & 0 & 0 \\
	0 & 0 & p^{2} \\
\end{array}
\right).\ee
The gravity state can finally be expressed as
\be \mathcal Z[\mathcal T\times\bar{\mathcal T},\Sigma_1;w]=p^2\left(w_{1}|\overline{\mathcal L_{(1,1)}}\rangle+w_{p}|\overline{\mathcal L_{(p,p)}}\rangle\right).\label{ZTQFT3}\ee

\paragraph{Genus 2:}
At $g=2$, the $\t$-matrix is square. Using the ordering $(1,1)$, $(1,p)$, $(p,p)$ for its indices we write it as an upper triangular matrix
\be t={p^2\over p^2+1}\left(
\begin{array}{ccc}
	2 p^2 & p^2-1 & 0 \\
	0 & p^2 (p+1) & p^2-p \\
	0 & 0 & p^2 \left(p^2+1\right) \\
\end{array}
\right).\ee
The genus-$2$ gravity state can now be expressed as
\be \begin{split}\mathcal Z[\mathcal T\times\bar{\mathcal T},\Sigma_2;w]=&{p^2\over 1+p^2}\bigg(2p^2 w_{(1,1)}|\overline{\mathcal L_{(1,1)}}\rangle+p^2(p^2+1)w_{(p,p)}|\overline{\mathcal L_{(p,p)}}\rangle\\& +(w_{(1,1)}(p^2-1)+w_{(1,p)}p(p-1))|\overline{\mathcal L_{(1,p)}}\rangle\bigg).\end{split}\label{ZTQFT4}\ee

\subsubsection{Interfaces and Gravity States of $\mathcal T\times\bar{\mathcal T}'$}\label{intr11}

The submodules of $\sG=\mathbb Z_{p^2}^2$ that can be gauged in the bulk are generated by
\be \begin{pmatrix}
	p & 0\\0 & p
\end{pmatrix},~~\begin{pmatrix}
	0 & p
\end{pmatrix},~~\begin{pmatrix}
	p & ap
\end{pmatrix},~~a\in\mathbb Z_p.\ee
If $p=1\mod 4$ there are two additional modules that can be gauged, generated by
\be \begin{pmatrix}
	1 & \pm x
\end{pmatrix},~~x^2=-1.\ee
In each case, we build the diagonal interfaces $\mathcal T$ and $\mathcal T'=\mathcal T/\H$. The complete set of interfaces can then be obtained by fusing these diagonal interfaces with surface operators of $\mathcal T$.
\begin{enumerate}
	\item \textbf{Case 1:} $\H=\begin{pmatrix}
		p & 0\\
		0 & p
	\end{pmatrix}$. the resulting $1$-form symmetry group $\H^\perp/\H$ is trivial and the Hilbert space of $\mathcal T/\H$ is one-dimensional. The interface is given by
	\be \mathcal I|0\rangle_{\mathcal T/\H}={1\over p}\sum_{a,b\in\mathbb Z_p}|ap,bp\rangle.\label{inter1}\ee
	Since the right sector is trivial, we will ignore it. For the left sector, if $p=3\mod 4$, there is a single Lagrangian, given by 
	\be L_p=\begin{pmatrix}
		p & 0\\
		0 & p
	\end{pmatrix},\ee
	hence the gravity state is proportional to the corresponding stabilizer state.
	
	For $p=1\mod 4$ there is an additional orbit with two Lagrangian submodules
	\be L_{\pm}=\begin{pmatrix}
		1 & \pm x
	\end{pmatrix},\quad x^2=-1.\ee
	Denoting the corresponding stabilizer states by $|L_p\rangle,|L_\pm\rangle$, the genus-1 TQFT gravity partition function is
	\be \mathcal Z[\mathcal T,\Sigma_1;w]={pw_1\over 1+p}(|L_+\rangle+|L_-\rangle)+\left({w_1(p-1)\over p+1}+w_p p\right)|L_p\rangle .\ee
	We can also calculate the genus-$g$ gravity partition function by restricting to the diagonal weights $\tilde w_{p^i}=w_{(p^i,\dots,p^i)}$
	\be \mathcal Z[\mathcal T,\Sigma_g;\tilde w]={p^g\tilde w_1\over 1+p^g}(|L_+\rangle+|L_-\rangle)+\left({\tilde w_1(p^g-1)\over p^g+1}+\tilde w_p p^g\right)|L_p\rangle .\ee
	\item \textbf{Case 2:} $\H\cong\mathbb Z_{p^2}$. These isotropic subgroups exist only when $p=1\mod 4$. They are given by $(1,\pm x)$, where $x^2=-1$. The Hilbert space of $\mathcal T/\H$ is one-dimensional and the interface is given by
	\be \mathcal I|0\rangle_{\mathcal T/\H}={1\over p}\sum_{a\in\mathbb Z_{p^2}}|a,\pm ax\rangle .\label{inter2}\ee
	The resulting TQFT gravity state is the same as in case 1, up to an overall constant.
	\item \textbf{Case 3:} $\H\cong\mathbb Z_{p}$. There are $p+1$ choices of $\H$ given by the generators $\begin{pmatrix}
		p & ap
	\end{pmatrix}$ and $\begin{pmatrix}
		0 & p
	\end{pmatrix}$. The $1$-form symmetry group of $\mathcal T/\H$ is $\mathbb Z_{p^2}$. The interfaces are respectively
	\be \mathcal I|x\rangle_{\mathcal T/\H}={1\over \sqrt p}\sum_{y\in\mathbb Z_p}|yp-ax,yap+x\rangle,~~a\in\mathbb Z_p ,\ee
	\be \mathcal I|x\rangle_{\mathcal T/\H}={1\over \sqrt p}\sum_{y\in\mathbb Z_p}|x,yp\rangle .\label{int3}\ee
	We will only consider $\H=(0,p)$, as the other choices are similar. Fusing (\ref{int3}) with the surface operators, we obtain $2$ orbits. The orbit $\O_1$, with elements isomorphic to $\mathbb Z_{p^2}\times\mathbb Z_p$ and $\O_p$, containing the unique Lagrangian submodule isomorphic to $\mathbb Z_p^3$.
	Define the states $|\O_1\rangle$ and $|\O_2\rangle$ to be the uniform average of each orbit divided by the cardinality of the orbit, as in (\ref{Oavg}). Then, the genus-$1$ TQFT gravity partition function takes the following simple form
	\be \mathcal Z[\mathcal T\times\bar{\mathcal T}',\Sigma_1;w]=p\left(w_1|\O_1\rangle+w_2|\O_2\rangle\right).\ee
	We can also calculate the genus-$g$ partition function by restricting to diagonal weights $\tilde w_{p^i}=w_{(p^i,\dots,p^i)}$, to obtain
	\be \mathcal Z[\mathcal T\times\bar{\mathcal T}',\Sigma_g;\tilde w]=\tilde w_1{(p+1)p^g\over 1+p^g}|\O_1\rangle+\left(\tilde w_1{p^g-p\over p^g+1}+\tilde w_2 p^g\right)|\O_2\rangle.\ee
	
\end{enumerate}
 
 \subsection{Example III: $\sG=\mathbb Z_{p^2}\times\mathbb Z_p$}\label{sec:ex3}
 
 In this example, we consider the TQFT $\mathcal T\times\bar{\mathcal T}$, where $\mathcal T=(\sG,q)$ with $\sG=\mathbb Z_{p^2}\times\mathbb Z_p$ and $q(a,b)=a^2+pb^2$. We also consider the theory $\mathcal T\times\bar{\mathcal T}'$, where $\mathcal T'=\mathcal T/\H$ for the isotropic subgroup $\H\cong\mathbb Z_p$ of $\sG$.
 
  \subsubsection{Surface operators}
 To construct the surface operators, we gauge $\H\subseteq\G$ on a surface with a choice of discrete torsion. There are $4p+2$ possible choices, classified as follows
 \begin{enumerate}
 	\item $\H=\{0\}$. The corresponding surface operator is the trivial one.
 	\item $\H\cong\mathbb Z_p$. There are $p+1$ subgroups in this class $(0,1)$, $(p,0)$ and $(p,a)$ with $a\in \{1,\dots,p-1\}$. The corresponding Lagrangians are
 	\be \begin{pmatrix}
 		1 & 0 & 1 & 0\\
 		0 & 1 & 0 &-1
 	\end{pmatrix},~~\begin{pmatrix}
 		p & 0 & p & 0\\
 		0 & 1 & 0 & 1\\
 		p & 0 & -p & 0
 	\end{pmatrix},~~\begin{pmatrix}
 		-a & 1 & -a & 1\\
 		p & a & -p & -a
 	\end{pmatrix},\quad a\in\{1,\dots,p-1\}.\ee
 	\item $\H\cong\mathbb Z_{p^2}$. There are $p$ subgroups of this form, generated by $(1,a)$ for $a\in\mathbb Z_p$ and the Lagrangians are
 	\be \begin{pmatrix}
 		ap & -1 & ap & -1\\
 		1 & a & -1 & -a
 	\end{pmatrix},\quad a\in\mathbb Z_p.\ee
 	
 	\item $\H\cong\mathbb Z_p\times\mathbb Z_p$. There is $1$ subgroup, with generators $\langle (p,0),(0,1)\rangle$ and $p$ choices of discrete torsion $H^2(\mathbb Z_{p}^2,\mathbb C^\times)=\mathbb Z_p$. In total there are $p$ Lagrangians in this class given by
 	\be \begin{pmatrix}
 		p & 0 & p & 0\\
 		p& 0 & -p &0\\
 		0 & 1 & 0 & -1		
 	\end{pmatrix},~~\begin{pmatrix}
 		p& -a & -p &-a\\
 		a & 1 & a & -1		
 	\end{pmatrix},\quad a\in\{1,\dots,p-1\}.\ee
 	\item $\H=\G$. There is a unique subgroup, but $p$ choices of discrete torsion, for a total of $p$ Lagrangians given by
 	\be \begin{pmatrix}
 		1 & -a & -1 & -a\\
 		ap & 1 & ap & -1
 	\end{pmatrix},\quad a\in\mathbb Z_p.\ee
 \end{enumerate}

 The Lagrangian submodules of $\G$ are split into 2 orbits under the action of $\O(\G)$. The first orbit $\O_{(1,1)}$ contains the $4p$ invertible surface operators, with Lagrangians isomorphic to $\mathbb Z_{p^2}\times\mathbb Z_p$. The second orbit $\O_{(p,1)}$ consists of $2$ non-invertible surfaces whose Lagrangians are isomorphic to $\mathbb Z_p^3$.

 \subsubsection{Gravity state of $\mathcal T\times\bar{\mathcal T}$}
 
 In this theory the surface operators belong to $|\mathbb S|=2$ orbits with $\mathbb S=\{(1,1),(p,1)\}$. 
 Meawhile, there are $|\mathbb T|=g+1$ symplectic orbits at genus $g$. The matrix $T_{\vec q;\vec q'}$ with the ordering $(1,1),(p,1)$ for its indices, reads
 \be T={2p^{2g}\over p^{3g}+p^{1+2g}-p^g-p}\begin{pmatrix}
 	2p^{g+1} & -2p\\
 	-2p & p^g+p^{1-g}+p-1
 \end{pmatrix}.\ee
 
 \paragraph{Genus 1:} At $g=1$ there are two symplectic orbits with $\vec d\in\{1,p\}$. The seed vector is $v_{p^i}^{p^j}=p^{\min\{i,j\}}$, and we find that the $\t$-matrix is diagonal
 \be \t={2p^2\over p+1}\begin{pmatrix}
 	1 & 0\\
 	0 & 1
 \end{pmatrix}.\ee
 This leads to the TQFT gravity partition function
 \be \begin{split}\mathcal Z[\mathcal T\times\bar{\mathcal T},\Sigma_1;w]&={w_12p^2\over 1+p}|\overline{\mathcal L_{(1,1)}}\rangle+{w_p2p^2\over p+1}|\overline{\mathcal L_{(p,1)}}\rangle\\&={w_1p\over 2(1+p)}\sum_{Q\in \O_{(1,1)}}|\mathcal L_{(1,1),Q}\rangle+{w_pp^2\over p+1}\sum_{Q\in\O{(p,1)}}|\mathcal L_{(p,1),Q}\rangle.\end{split}\ee
 
 \paragraph{Genus $>$ 1:} At $g>1$ we have $|\mathbb T|>|\mathbb S|$, therefore $\t$ is not uniquely defined and we restrict again to the diagonal weights $\tilde w_i\equiv w_{p^i,\dots,p^i}$. The seed vector restricted to this subspace is $\bar v_{p^i}^{p^j}=p^{g\min\{i,j\}}$, and we find the restricted matrix $\bar \t$
 \be \bar \t={p^{2g}\over (1+p^g)(p+p^2)}\begin{pmatrix}
 	1 & 1-p^{1-g}\\
 	0 & p+p^g
 \end{pmatrix}.\ee
 The TQFT gravity partition function is
 \be \mathcal Z[\mathcal T\times\bar{\mathcal T},\Sigma_g;\tilde w]={p^{2g}\over (1+p^g)(p+p^2)}(\tilde w_1|\overline{\mathcal L_{(1,1)}}\rangle+(\tilde w_1(1-p^{1-g})+\tilde w_p(p+p^g))|\overline{\mathcal L_{(p,1)}}\rangle).\ee
 
  \subsubsection{Interfaces and gravity state of $\mathcal T\times\bar{\mathcal T}'$}
 
 There is a unique isotropic submodule $\H= (p,0)$ of $\sG$. The theory $\mathcal T/\H$ has $1$-form symmetry $\K=\H^\perp/\H\cong \mathbb Z_p$ and quadratic form $q'(a)=a^2\mod p$. The interface $\mathcal I:\mathcal H_{\mathcal T/\H}\to\mathcal H_{\mathcal T}$ reads
 \be \mathcal I={1\over \sqrt{p}}\sum_{a,b\in\mathbb Z_p} |pb,a\rangle\langle a|_{\mathcal T/\H} .\label{intr1}\ee
The Lagrangian submodules of $\G=\mathbb Z_{p^2}\times \mathbb Z_p\times\bar{\mathbb Z}_p$ can be obtained by fusing the interface (\ref{intr1}) with surface operators of $\mathcal T$. This way, we obtain two submodules, belonging to the same orbit of $\O(\G)$
 \be \lambda_\pm=\begin{pmatrix}
 	0 & 1 & \pm1\\
 	p & 0 & 0
 \end{pmatrix}.\ee
 Hence, the TQFT gravity path integral at any genus is simply proportional to the uniform sum of these two states
 \be \mathcal Z[\mathcal T\times\bar{\mathcal T}',\Sigma_g;\tilde w]\propto|\lambda_+\rangle+|\lambda_-\rangle.\ee
 
 \subsection{Example IV: $\G=\mathbb Z_{p^2}^3$}
In this final example, we consider a theory $\mathcal T$ with $1$-form symmetry group $\G=\mathbb Z_{p^2}^3$ and anti-diagonal bilinear form
 \be \g=\begin{pmatrix}
 	0 & 0 & 1\\
 	0 & 1 & 0\\
 	1 & 0 & 0
 \end{pmatrix}.\ee
 We first check whether the higher central charges vanish \cite{Kaidi:2021gbs}. It is straightforward to calculate
 \be e^{{2\pi i\over 8}c_n}={1\over p^3}\sum_{a,b,c\in\mathbb Z_{p^2}} \omega^{n(2ac+b^2)}=1,\quad\forall n:\gcd(n,p^6/\gcd(n,p^6)),\ee
 where we used that $p^2=1\mod 4$ for any odd $p$.
 
The non-trivial step is finding all the Lagrangian submodules of $\G$, which can be done by explicitly solving the Lagrangian conditions for this group. We find $p+1$ submodules isomorphic to $\mathbb Z_{p^2}\times\mathbb Z_p$ and one submodule isomorphic to $\mathbb Z_p^3$
 \be \mathcal L_{p+1}=\begin{pmatrix}
 	p & 0 & 0\\
 	0 & p & 0\\
 	0 & 0 & p
 \end{pmatrix},~~\mathcal L_p=\begin{pmatrix}
 	0 & 0 & 1\\
 	0 & p & 0
 \end{pmatrix},\ee
 \be \mathcal L_a=\begin{pmatrix}
 	1 & a & -2^{-1}a^2\\
 	0 & p & -pa
 \end{pmatrix},~~a=0,1,\dots,p-1.\ee

 The intersection numbers are $|\mathcal L_i\cap \mathcal L_j|=p$ for $0\leq i<j\leq p$, while $|\mathcal L_i\cap \mathcal L_{p+1}|=p^2$ for $0\leq i\leq p$. The inverse of the intersection matrix is
 \be (D)^{-1}={1\over p^g-p^{-g}}(p^g\id+A),\ee
 where
 \be A=\begin{pmatrix}
 	0 & 0& \cdots &0 & -1\\
 	0 & 0& \cdots &0 & -1\\
 	\vdots& \vdots &\vdots &\vdots &\vdots \\
 	0 & 0& \cdots &0 & -1\\
 	-1 & -1& \cdots &-1 & p^{1-g}\end{pmatrix}.\ee
 The matrix $Y$ reads
 \be Y={p+1\over p^g-p^{-g}}\begin{pmatrix}
 	p^g & -1\\
 	-1 &{p^g+ p^{1-g}\over p+1}
 \end{pmatrix} .\ee
 
 \paragraph{Genus 1:}
At $g=1$, the $\t$-matrix is diagonal and reads
 \be \t= \begin{pmatrix}
 	p & 0\\
 	0 & p
 \end{pmatrix}.\ee
 Therefore the gravity state at genus $1$ is
 \be \mathcal Z[\mathcal T,\Sigma_1;w]=p\left({w_1\over p+1}\sum_{i=0}^p |\mathcal L_i\rangle+w_p|\mathcal L_{p+1}\rangle\right).\ee
 
  \paragraph{Genus$>$1:}
  At $g>1$, restricting to the diagonal weights $\tilde w_{p^i}=w_{(p^i,\dots,p^i)}$ we obtain the restricted matrix
  \be \bar\t=\begin{pmatrix}
  	p^g{p+1\over p^g+1} & {p^g-p\over p^g+1}\\
  	0 & p^g
  \end{pmatrix} .\ee
  This leads to the following genus-$g$ gravity state
  \be \mathcal Z[\mathcal T,\Sigma_g;\tilde w]={\tilde w_1p^g\over 1+p^g}\sum_{i=0}^p |\mathcal L_i\rangle+\left(\tilde w_p p^g+\tilde w_1{p^g-p\over 1+p^g}\right)|\mathcal L_{p+1}\rangle.\ee

\section{Outlook}\label{sec:conc}

In this work, we defined the gravitational path integral of a 3D abelian TQFT $\mathcal T$ as a sum over all topologies with genus-$g$ boundary $\Sigma_g$. The path integral of $\mathcal T$ prepares an $\eta$-Lagrangian stabilizer state on any single topology with boundary $\Sigma_g$, partitioning the topologies into finitely many equivalence classes. Thus, the gravitational path integral can be equivalently defined as a weighted sum over these stabilizer states. The gravitational path integral can then be expressed as a weighted sum over $\q$-Lagrangian stabilizer states, which describe topological boundary conditions on $\Sigma_g$.
From the boundary perspective, this amounts to an average over partition functions of two-dimensional CFTs, specifically a finite ensemble of Narain CFTs, where the chiral central charge is a multiple of $8$\footnote{When the chiral central charge is not a multiple of 24, the framing anomaly of the bulk manifold can be canceled by stacking copies of $(E_8)_1$.}.

We introduced the $\t$-matrix, relating bulk and boundary weights and developed a method to calculate its entries from the TBCs that the TQFT admits. Specifically, the $\t$-matrix is given by
\be \t_{\vec d,\vec q}=\sum_{\vec q';\vec q}v_{\vec q'}^{\vec d}Y_{\vec q';\vec q},\ee
where the matrix $Y_{\vec q';\vec q}$ is determined by the overlaps between orbits of TBC states on $\Sigma_g$ and $v_{\vec q}^{\vec d}$ is a matrix that can be determined only from the $1$-form symmetry group $\G$ and the genus of the boundary surface.
Above, the index $\vec d$ denotes a symplectic orbit of representative topologies, while $\vec q,\vec q'$ denote orthogonal orbits of TBCs.
 We illustrated this calculation with several examples.
 
In the process, we also described (section \ref{sec3}) a systematic method for constructing all TBC of abelian TQFTs of the form $\mathcal T\times\bar{\mathcal T}$, by taking advantage of their correspondence with surface operators of $\mathcal T$, as well as the groupoid structure of surface operators \cite{Gaiotto_2021}. We further generalized this to some theories of the form $\mathcal T \times \overline{\mathcal T}'$, where the classification of TBCs is equivalent to the classification of interfaces between the two sectors.

We conclude with a list of open problems.

\paragraph{The $\t$-matrix} In this paper we computed the $\t$-matrix in explicit examples using a brute-force approach, which requires the full set of TBCs. Although we described a systematic procedure for constructing this set, it becomes impractical for groups of large length. Interestingly, the $\t$-matrices we obtained have a simple structure. In fact, only overlaps between orthogonal orbits of TBC states are needed, rather than overlaps between all states individually. Alternatively, the construction of section \ref{sec:TQFTgrav} can be reformulated in terms of overlaps between symplectic orbits. These two observations suggest that a simpler method to compute $\t$ may exist.

\paragraph{Universality of bulk weights} In this work we largely remained agnostic about the choice of bulk weights. Consistency with the holographic principle requires that the boundary weights are independent of the genus of the surface, but this does not uniquely fix the bulk weights. A prescription consistent with holography was proposed in \cite{Dymarsky:2024frx}, but it is unclear whether other consistent prescriptions exist. Alternatively, one could define the TQFT gravity path integral by starting from a (formal) sum over all distinct bulk topologies with boundary $\Sigma_g$, weighted by a measure $\mu$, as in \cite{Nicosanti:2025xwu}. For consistency with a path-integral interpretation, $\mu$ should be TQFT-independent. Since the TQFT path integral on any topology yields an $\eta$-Lagrangian stabilizer state, this construction reduces to (\ref{gravityZdef}), with $w_{\vec d}$ determined by $\mu$, possibly after regularization. An open question is what constraints, if any, the universality of $\mu$ imposes on the bulk weights $w_{\vec d}$.

\paragraph{Semi-classical gravity} The handlebody topologies $\vec 1 \in \mathbb T$ can be interpreted as the semi-classical contributions to the gravitational path integral, which are always present. The non-handlebody topologies $\vec d \neq \vec 1$ then play the role of quantum corrections. The semi-classical limit corresponds to sending either the exponent or the length of $\G$ to infinity, in which case we expect the contributions from non-handlebody topologies to be suppressed. Indeed, the prescription in \ref{DSpresc} along with (\ref{vmatr}) imply that the bulk weights $w_{\vec d}$ with $\vec d\neq\vec 1$ are suppressed as the exponent $N$ is increased. Our approach is not straightforward to apply to groups of large length, but one should instead reformulate section \ref{sec:TQFTgrav} using the overlap matrix of $\eta$-Lagrangian submodules. It would be interesting to rigorously and generally prove this suppression of non-handlebody contributions.

\paragraph{Generalizations} Several extensions of this work remain. While we focused on groups of odd exponent and specific classes of quadratic forms, the construction should generalize to all abelian TQFTs that admit TBCs, though the details deserve careful treatment. Another direction is to include correlation functions. In this case the boundary surface has punctures, enlarging the mapping class group action and requiring a generalized averaging procedure \cite{Romaidis:2023zpx}. Finally, extending our framework to non-abelian TQFTs is another interesting direction. The approach of section \ref{sec:TQFTgrav} can still be used if one can identify a spanning set of $\Sp(2g,\mathbb Z)$-invariant states, however the TBC states (corresponding to physical modular invariants) do not generally span this space. From the holographic perspective, the gravitational state must be expressed entirely in terms of physical invariants, which means the bulk weights must be tuned so that the resulting state lies in their span.

\section*{Acknowledgements}

The author is grateful to Chen Fei for contributing the material in Appendix B and for valuable discussions, and to Anatoly Dymarsky for helpful comments and discussions.

\newpage
\appendix

\section{Higher symmetry lines and Wigner functions}\label{highergauging}

In this appendix we review how to construct higher symmetry lines on the surface operator \cite{Roumpedakis:2022aik}. These ``higher lines" embedded on the surface are the phase-space point operators in the stabilizer formalism. These operators form a very useful alternative orthonormal basis for $\End(\mathcal H)$. They are typically defined by the symplectic Fourier transform of the Heisenberg-Weyl operators \cite{Gross_2006}
\be \hat L^{(g)}(\alpha)={1\over |\G|^g}\sum_{\beta\in \G^{2g}} \omega^{J(\alpha,\beta)}\W(\beta).\label{pso}\ee
We can interpret (\ref{pso}) as lines on a surface operator obtained by gauging $\G$ with trivial torsion on a genus-$g$ surface inserted parallel to the two boundaries of $\Sigma_g\times I$. This surface is the charge-conjugation operator
\be \rho_{\G,0}={1\over |\G|^g}\sum_{\alpha\in \G\otimes H_1(\Sigma_g,\mathbb Z)}\W(\alpha)=\left(\sum_{x\in \G}|x\rangle\langle -x|\right)^{\otimes g} .\label{gaugG}\ee
Gauging the group $\G$ results in a ``quantum symmetry" $\what\G\cong \G$ of lines confined to the surface. These lines can be constructed by bringing together two bulk lines of opposite orientation and same charge and fusing them with the surface from opposite sides
\be \hat L^{(g)}(\alpha)=\W(\alpha)\rho_{\G,0}\W(\alpha)^\dagger.\ee
On a torus, they read explicitly
\be \hat L^{(1)}(a,b)=\sum_{x\in\G} \omega^{2bx}|x+a\rangle\langle-x+a|.\ee
The expansion coefficients of an operator $\rho$ in the $\hat L(\alpha)$ basis is the Wigner function
\be \mathbb W_{\rho}(\alpha)\equiv {1\over |\G|^g}\tr(\hat L(\alpha)^\dagger \rho).\ee
The Wigner function has several important properties \cite{Gross_2006}. One of the most important properties we make use of is that the Wigner function of a stabilizer state is nonnegative.

More generally, we can construct ``higher" line operators by performing the Fourier transform over a subgroup $\H$ and adding discrete torsion $\nu\in H^2(\H,\mathbb C^\times)$. This process constructs line operators labeled by $\what\H$ and confined on a surface
\be \hat L_{\H,\nu}^{(g)}(\alpha)={1\over |\H|^g}\sum_{\beta\in \H^{2g}}\omega^{J(\alpha,\beta)}\nu(\beta)\W(\beta),~~~\alpha\in\what\H^{2g} .\ee
The higher symmetry lines belonging to $\what\H$ can be obtained by fusing two bulk lines with charges in $\H$ and opposite orientation, from opposite sides of the surface
\be \hat L_{\H,\nu}^{(g)}(\alpha)=\W(\alpha)\rho^{\otimes g}_{\H,\nu}\W(\alpha)^\dagger,~~\alpha\in\what\H^{2g}.\ee
Note that in the above definition $\hat L_{\H,\nu}(\alpha)=\rho_{\H,\nu}$ if $\alpha\in (\G/\H)^{2g}$.
Similar to the Heisenberg-Weyl operators, they can be written as a tensor product of torus operators
\be \hat L^{(g)}(\alpha)=\hat L(\alpha_1)\otimes\cdots\otimes \hat L(\alpha_g).\ee
We can write the torus operators explicitly
\be \hat L_{\H,\nu}(a,b)=\W(a,b)\rho_{\H,\nu}\W(a,b)^\dagger=\sum_{(x,h)\in \H\times\H^\perp}\omega^{-2\g(b,x)}|h+a+x(-\id+B\g^{-1})\rangle\langle h+a+x(1+B\g^{-1})| .\ee
Above $a,b\in\H$, since $\hat L_{\H,\nu}(a,b)=\rho_{\H,\nu}$ for $a,b\in \H^\perp$.

If $\H\neq \G$, there is another set of symmetry lines parametrized by $\G/\H$. This is the residual symmetry after gauging $\H$. They can be constructed by attaching lines of the same charge and orientation from the two opposite sides of the surface
\be \tilde L_{\H,\nu}(a,b)=\W(2^{-1}a,2^{-1}b)\rho_{\H,\nu}\W(2^{-1}a,2^{-1}b).\ee

Unless $\what\H\times\G/\H\cong\G$, the symmetry lines generally have a mixed anomaly. Subgroups of either $\what\H$ or $\G/\H$ are non-anomalous and can be gauged separately, but mixed subgroups cannot. 

Due to the fact that the surface operators are invariant under the mapping class group, the higher line operators transform covariantly under $\Sp(2g,\mathbb Z_N)$
\be U_\gamma L_{\H,\nu}^{(g)}(\alpha,\beta)U_\gamma^\dagger=L_{\H,\nu}^{(g)}(\alpha\gamma,\beta\gamma),~~~\gamma\in \Sp(2g,\mathbb Z_N).\ee
As a consequence the Wigner function transforms contravariantly
\be \mathbb W_{U_\gamma\rho U_\gamma^\dagger}(\alpha\gamma)=\mathbb W_{\rho}(\alpha).\ee

\section{The minimal matrix of a $\q$-Lagrangian submodule}\label{appB}

 Without loss of generality, we set $N=p^m$.
\begin{definition}
	Let $\mathcal L$ be a Lagrangian submodule of $\G=\mathbb Z_{p^m}^{2n}$ with respect to $\q$ associated with the bilinear form (\ref{antidiag}).  
	A \emph{minimal matrix} $M$ of $\mathcal L$ is an $n\times 2n$ matrix which, up to permutation of columns, satisfies the following conditions:
	\begin{enumerate}
		\item Each row of $M$ lies in $\mathcal L$.
		\item $M_{ij}=0$ whenever $j<i$.
		\item The entries of the $i$-th row have greatest common divisor $p^{k_i}$.
		\item The sequence of exponents satisfies $0\leq k_1 \leq \cdots \leq k_n \leq \lfloor m/2 \rfloor$.
		\item The sequence $(k_1,\dots,k_n)$ is minimal among all choices of matrices satisfying the above.
	\end{enumerate}
\end{definition}
\begin{lemma}
	A minimal matrix $M$ always exists, and it uniquely determines the submodule $\mathcal L$.
\end{lemma}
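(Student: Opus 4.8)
The plan is to prove existence by a finiteness argument and the determination claim by exhibiting a canonical way to reconstruct $\mathcal L$ from the rows and gcd-exponents of $M$. \textbf{Existence.} I would start from any finite generating matrix of $\mathcal L$ — by Theorem~\ref{theorem:Orthogonal_codes} one may even take $\operatorname{diag}(q_1,\dots,q_n,N/q_n,\dots,N/q_1)\,Q$ for suitable $\vec q$ and $Q\in\O(\G)$ — and put it in row-echelon (Howell/Hermite-type) form over the chain ring $\mathbb Z_{p^m}$, allowing column permutations. The Lagrangian property forces the echelon form to have a "divisible dual" structure, so one can always extract an $n\times 2n$ block whose rows lie in $\mathcal L$ and which is upper-triangular in the first $n$ columns (conditions 1–2), with the gcd of row $i$ equal to $p^{k_i}$ (condition 3); the bound $q_i\le p^{\lfloor m/2\rfloor}$ of Theorem~\ref{theorem:Orthogonal_codes} together with $\g|_{\mathcal L}=0$ then gives $0\le k_1\le\cdots\le k_n\le\lfloor m/2\rfloor$ (condition 4). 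Already the diagonal case $\mathcal L_{\vec q,\id}$ exhibits such a block, namely $M=\operatorname{diag}(q_1,\dots,q_n,0,\dots,0)$. Finally, the set of exponent tuples $(k_1,\dots,k_n)$ realizable by some matrix obeying 1–4 is a nonempty subset of $\{0,1,\dots,\lfloor m/2\rfloor\}^n$, hence finite, so it has a minimal element (say in lexicographic order); any matrix achieving it satisfies condition 5, settling existence.

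\textbf{Determination.} Write the rows of $M$ as $v_i=p^{k_i}w_i$ with $w_i$ primitive (entries coprime). I would first show, using minimality, that the $w_i$ are pairwise $\g$-orthogonal and span a primitive (direct-summand) isotropic submodule $P\subseteq\mathbb Z_{p^m}^{2n}$ of rank $n$ — if $P$ failed to be primitive one could, after row operations, lower one of the exponents $k_i$, contradicting condition 5. Completing $P$ to a symplectic basis $w_1,\dots,w_n,\hat w_1,\dots,\hat w_n$, I then claim
\[
\mathcal L=\bigl\langle\, p^{k_1}w_1,\dots,p^{k_n}w_n,\ p^{m-k_1}\hat w_1,\dots,p^{m-k_n}\hat w_n\,\bigr\rangle .
\]
The first $n$ generators are the rows of $M$; for the rest, the Lagrangian condition $\g(v_i,\cdot)\equiv 0$ on $\mathcal L$ forces every element of $\mathcal L$ to have $\hat w_i$-component divisible by $p^{m-k_i}$, and the cardinality count $\prod_i p^{k_i}\cdot\prod_i p^{m-k_i}=p^{mn}=|\mathcal L|$ shows these generators already exhaust $\mathcal L$. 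Since $k_i+k_j\le m$, changing the symplectic completion $\hat w_i$ by a combination of the $w_j$ alters $p^{m-k_i}\hat w_i$ only by integer multiples of the $v_j\in\mathcal L$, so the displayed submodule does not depend on that part of the freedom; hence $\mathcal L$ is a function of $M$ alone.

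\textbf{Main obstacle.} The delicate part is the determination step. One must check carefully that (i) minimality genuinely forces $P$ to be a primitive isotropic summand with the $w_i$ completable to a symplectic basis, and (ii) the formula for $\mathcal L$ is independent of \emph{all} the freedom in choosing $\hat w_1,\dots,\hat w_n$, not only of adding multiples of the $w_j$. A clean way to organize (ii) is to reduce the symplectic module $\mathcal L_0^{\perp}/\mathcal L_0$ (where $\mathcal L_0$ is the row span of $M$) to a direct sum of rank-$2$ pieces, one per index $i$, and to show that in each piece the Lagrangian of the prescribed type $(\mathbb Z_{p^{k_i}}\times\mathbb Z_{p^{m-k_i}})$ is unique — the $n=1$ case — so that $\mathcal L/\mathcal L_0$ is pinned down summand by summand. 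Equivalently, one proves directly that $\mathcal L$ is the unique $\q$-Lagrangian submodule containing the rows of $M$ and having invariant-factor type $\prod_i(\mathbb Z_{p^{k_i}}\times\mathbb Z_{p^{m-k_i}})$, this type being precisely what minimality of $M$ certifies. I expect this case analysis, rather than existence, to be where the real work lies.
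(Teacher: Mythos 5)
Your determination step rests on a claim that is false. You assert that minimality forces the primitive parts $w_i$ of the rows (where $v_i=p^{k_i}w_i$) to be pairwise $\g$-orthogonal and to span a primitive \emph{isotropic} rank-$n$ summand $P$. But the Lagrangian condition on the rows only gives $\g(w_i,w_j)\equiv 0 \bmod p^{\,m-k_i-k_j}$, which is vacuous as soon as $k_i+k_j\geq m$, and minimality does not repair this. Concretely, take $n=2$, $m=2$, $\mathcal L=p\,\mathbb Z_{p^2}^4$ (the $\vec q=(p,p)$ Lagrangian for the antidiagonal form). The matrix with rows $(p,0,0,0)$ and $(0,p,0,p)$ satisfies all five conditions of the definition: both rows lie in $\mathcal L$, the triangularity condition holds, the row gcds are $p$, and the exponent sequence $(1,1)$ is automatically minimal because every entry of every element of $\mathcal L$ is divisible by $p$. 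Yet $w_1=(1,0,0,0)$, $w_2=(0,1,0,1)$ satisfy $\g(w_1,w_2)=1$, so $P=\langle w_1,w_2\rangle$ is a hyperbolic plane rather than an isotropic summand; no completion $\hat w_1,\hat w_2$ of the kind you need exists (and note the completion you want is hyperbolic for the symmetric form $\g$, not "symplectic"), so your displayed formula for $\mathcal L$ cannot even be written down for this minimal matrix. Since the lemma quantifies over \emph{all} minimal matrices, the argument as proposed does not go through; your fallback reformulation (that $\mathcal L$ is the unique Lagrangian of the prescribed invariant-factor type containing the rows) is closer to a true statement, but you leave it unproven, and it is essentially the whole content of the determination claim.

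The paper avoids this trap by never using orthogonality of the chosen rows: minimality is exploited only through the global statement that $p^{k_1}$ is the gcd of the entries of \emph{all} elements of $\mathcal L$, whence $\mathcal L\subseteq p^{k_1}\mathbb Z_{p^m}^{2n}$ and, by $\mathcal L=\mathcal L^{\perp}$, $p^{m-k_1}\mathbb Z_{p^m}^{2n}\subseteq\mathcal L$; this supplies the vector $(0,\dots,0,p^{m-k_1})$ and permits an induction that strips off one row at a time, passing to a Lagrangian in $\mathbb Z_{p^{m-2k_1}}^{2n-2}$. To salvage your route you would have to show either that one can always replace a given minimal matrix by one whose $w_i$ do span a primitive isotropic summand \emph{and} that this replacement does not weaken the statement (not obvious), or argue row by row as the paper does. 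Your existence sketch (echelon form over $\mathbb Z_{p^m}$ plus finiteness and lexicographic minimality) is acceptable in outline and differs only cosmetically from the paper's greedy inductive construction, though the "divisible dual structure" step should be spelled out rather than asserted.
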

\begin{proof}
	We first show existence of $M$.
	From the definition, $p^{k_1}$ is the greatest common divisor of the entries of all elements of $\mathcal L$.  
	Since $\mathcal L$ is Lagrangian, $k_1 \leq \lfloor m/2 \rfloor$. After a $\q$-preserving permutation of columns if necessary, there exists a vector $v_1 \in \mathcal L$ of the form
	$
	v_1=(p^{k_1},*,\dots,*).
	$
	Define the submodule
	$
	\mathcal L'=\{v\in \mathcal L : v=(0,x_2,\dots,x_{2n})\}.
	$
	Then $\mathcal L = \mathcal L' + \langle v_1\rangle$.  
	For each $v=(0,x_2,\dots,x_{2n})\in\mathcal L'$, define
	$
	\bar v = (\bar x_2,\dots,\bar x_{2n-1})\in \mathbb Z_{p^{m-2k_1}}^{2n-2},
	$
	where $p^{k_1}\bar x_i=x_i(\!\!\!\!\mod p^{k-k_1})$.  
	Set
	$
	\bar{\mathcal L} = \{\bar v : v\in \mathcal L'\}.
	$
	We claim that $\bar{\mathcal L}$ is a Lagrangian submodule of $\mathbb Z_{p^{m-2k_1}}^{2n-2}$.  
	Suppose not; then there exists $u$ such that $\bar\g(u,\bar v)=0$ for all $\bar v\in\bar{\mathcal L}$, where $\bar\g$ is the bilinear form restricted to $\mathbb Z_{p^{m-2k_1}}^{2n-2}$.  
	Choose $u'=(0,y_2,\dots,y_{2n-1},0)\in\mathbb Z_{p^m}^{2n}$ such that $\bar u'=u$. Then there exists $y_{2n}$ such that
	$
	\g\big(v_1,(0,y_2,\dots,y_{2n-1},y_{2n})\big)=0,
	$
	implying $u\in \bar{\mathcal L}$, a contradiction.  
	Thus $\bar{\mathcal L}$ is Lagrangian.  	
	By induction on $(m,n)$, $\bar{\mathcal L}$ admits a minimal matrix and choosing a lift produces a minimal matrix $M$ for $\mathcal L$.
	
	We now show that $M$ uniquely determines $\mathcal L$. By induction, it suffices to show that the vector
	$
	(0,\dots,0,p^{m-k_1}) \in \mathcal L.$
	This is true since from the definition it follows that $p^{k_1}$ is the gcd of entries of all elements of $\mathcal L$.  
	Hence $\mathcal L$ is uniquely determined by $M$.
\end{proof}

\bibliographystyle{unsrt} 
\bibliography{../bibt} 

\end{document}